\newcommand{\alert}[1]{\textcolor{black}{#1}}
\DeclareMathOperator*{\argmin}{arg\; min}     % argmin
\DeclareMathOperator*{\tr}{tr}     % trace
\DeclareMathOperator{\Cov}{Cov}
\DeclareMathOperator{\diag}{diag}
\DeclareMathOperator{\trace}{trace}
\newtheorem{theorem}{Theorem}
\newtheorem{lem}{Lemma}
\newtheorem{cor}{Corollary}
\newtheorem{rem}{Remark}
\begin{document}

\title{Multi-Sensor Scheduling for State Estimation with Event-Based, Stochastic Triggers}

\author{Sean~Weerakkody,~\IEEEmembership{Student~Member,~IEEE,}
        Yilin~Mo,~\IEEEmembership{Member,~IEEE,}
        Bruno~Sinopoli,~\IEEEmembership{Member,~IEEE,} 
        Duo~Han,~\IEEEmembership{Student~Member,~IEEE,}
        and~Ling~Shi~ \IEEEmembership{Member,~IEEE}

% Title, preferably not more than 10 words.
\thanks{The work by S. Weerakkody, Y. Mo, and B. Sinopoli is supported by NSF grant 0955111 CAREER: Efficient, Secure and Robust Control of Cyber Physical Systems and NSF grant 1135895 CPS: Medium: Collaborative Research: The Cyber Physical Challenges of Transient Stability and Security in Power Grids.}
\thanks{The work by D. Han, and L. Shi is supported by a HK RGC GRF grant 618612.}
\thanks{S. Weerakkody and B. Sinopoli are with the Electrical and Computer Engineering Department, Carnegie Mellon University, Pittsburgh, PA, 15213 USA e-mail: sweerakk@andrew.cmu.edu, brunos@ece.cmu.edu}
\thanks{Y. Mo was with the ECE department of Carnegie Mellon University, Pittsburgh, PA, when this article was written. He is now with the department of Control and Dynamical Systems, California Institute of Technology, Pasadena, CA. email: {yilinmo@caltech.edu}}
\thanks{D. Han and L. Shi are with the ECE department of Hong Kong University of Science
and Technology, Clear Water Bay, Kowloon, Hong Kong. e-mail: fdhanaa, eeslingg@ust.hk.}}

\maketitle

\maketitle

\begin{abstract}                % Abstract of not more than 250 words.
In networked systems, state estimation is hampered by communication limits. Past approaches, which consider scheduling sensors through deterministic event-triggers, reduce communication and maintain estimation quality. However, these approaches destroy the Gaussian property of the state, making it computationally intractable to obtain an exact minimum mean squared error estimate. We propose a stochastic event-triggered sensor schedule for state estimation which preserves the Gaussianity of the system, extending previous results from the single-sensor to the multi-sensor case. 
\end{abstract}

%===============================================================================

\section{Introduction}
Networked Control Systems (NCSs), spatially distributed systems where sensors, actuators, and controllers exchange information over a shared, bandlimited communication network, have become a topic of significant interest in both academia and industry. As noted by \cite{joao07}, the use of NCSs in practice provides for flexible architecture and reduces costs in installation and maintenance. Thus, NCSs have been used in several applications including public transportation, health care, and mobile sensor networks. Nonetheless, remote state estimation remains a significant challenge in NCSs \cite{mahalik2007sensor}. Traditionally, state estimates are computed at an estimation center using information from sensors which sample and send measurements periodically. While it is reasonable to assume that remote state estimation centers are well equipped, in most cases, sensors have a limited power supply and are difficult to replace. Moreover, bandwidth constraints in a communication network may restrict the number of sensors which can communicate at any given time \cite{ribeiro2006bandwidth}, \cite{luo2005isotropic}, \cite{mosensor}. One way to address these issues is to simply reduce the communication rate. This solution however degrades estimation quality. In this paper, we propose a sensor scheduling scheme which allows us to achieve a desired tradeoff between communication rate and estimation performance. Specifically, we design a stochastic multi-sensor event-based schedule for the remote state estimation problem which extends the single sensor results from \cite{2013CDC}. 

Before continuing, we briefly document recent attempts to address the problem of remote estimation via sensor scheduling. We first examine offline schemes where sensors are scheduled based on system parameters prior to use. Yang et. al. \cite{yang2011deterministic} determined that given fixed communication constraints, an optimal deterministic offline schedule should allocate sensor transmission times as uniformly as possible over a finite time horizon. Moreover, Shi et. al. \cite{shi2012scheduling} specifically considered the 2-sensor problem with bandwidth constraints and found that a periodic sensor schedule minimized average error covariance. In addition to offline designs, previous work has considered event-based designs, where sensor transmissions are scheduled in real time based on an occurrence related to a sensor measurement or current system parameters.  Astrom and Bernhardsson \cite{astrom2002} show that for certain systems, event based sampling offers better performance than periodic sampling. Additionally, Imer et. al. \cite{imer2005optimal} consider a single sensor sequential estimation problem where the state is represented by an independent identically distributed  (i.i.d) process. The authors assume communication is limited over a finite horizon and propose a stochastic solution. Furthermore, Xu et. al. \cite{Xu2005} consider scheduling a single, smart sensor which computes and sends a local estimate of the state. The authors propose a stochastic event trigger, where the rate of transmission is a quadratic function of the difference between the state estimate computed at the sensor and the estimate computed at the remote estimator.

While not utilized in \cite{astrom2002}, \cite{imer2005optimal}, and \cite{Xu2005}, event-based approaches can allow the estimator to extract information about the state from the absence of a measurement, and thus improve its estimate. For instance,  Ribeiro et. al. \cite{Ribeiro2006} require the transmission of a single bit per observation based on the sign of the innovation and derive an approximate minimum mean squared error (MMSE) estimator. Also, the authors in \cite{Wutobepublished}  design a threshold scheme on the normalized innovation vector to trigger communication to the remote estimator, and derive an approximate MMSE estimate. Deterministic schemes as discussed by \cite{Ribeiro2006}, \cite{Wutobepublished} destroy the Gaussian property of the innovation process in traditional Kalman filtering, thus rendering the closed-form derivation of the exact MMSE estimator computationally intractable.  \alert{Symmetric triggers such as those proposed in \cite{ramesh2013} and \cite{2013CDC} allow the remote estimator to compute an MMSE estimate. Here, the triggers are designed so that a priori and a posteriori estimates are identical if a measurement is dropped which implicitly requires that the sensor has access to the same information as the estimator. However, this is not feasible in the multi-sensor case without substantially increasing communication in the network.}

\alert{Han et. al. in \cite{2013CDC} incorporate a stochastic decision rule, which not only allows the remote estimator to use information contained in the absense of a measurement, but also maintains the Gaussian distribution of the current state. A key advantage of the proposed method over most deterministic triggers is that in addition to obtaining an exact MMSE estimator, by preserving Gaussianity, \cite{2013CDC} maintains an exact distribution of the state $x_k$ and the estimation error $e_k$ for all time $k$. Thus, the proposed stochastic event-based trigger is useful in scenarios where real time error analysis is critical.} In this paper, we extend the same stochastic decision rule to the multi-sensor case where there exists a unique decision variable for each of $m$ sensors. \alert{The main contribution of this paper relative to \cite{2013CDC}, which considers a binary transmit or drop policy for a single trigger, is the derivation of a two-step estimation filter to account for multiple independent triggers, a modified optimization problem to design each trigger, and a realistic simulation example on data center energy management.} For this scenario, we also obtain expressions for sensor communication rates and upper and lower bounds on the error covariance.  A preliminary study for this paper was previously presented \cite{Weerakkody2013}. \alert{Here a three-step recursive filter is proposed which computes a state distribution conditioned on all previous information, newly received measurements, and the identity of sensors which do not transmit sequentially. In this article, we obtain an equivalent two-step recursive filter which combines the last two stages, allowing us to directly obtain an a posteriori state distribution without any intermediary steps. We also extend \cite{Weerakkody2013} by accounting for vector sensor measurements with correlated sensor noise as well as through our optimization problem and simulation example.}

The remainder of the paper is organized as follows. Section \ref{section:problem-setup} formulates the multi-sensor state estimation problem and proposes a stochastic event-based sensor scheduling scheme. Section \ref{section:main-results} introduces a recursive filtering algorithm to obtain the MMSE estimator of the state and its error covariance. Section \ref{section:performance-analysis} derives results about communication rate and estimation performance. Section \ref{section:optimization} proposes a semi-definite program to intelligently select trigger parameters. Section \ref{sec:numanly} consists of a simulation. A conclusion at the end summarizes future work. 

\textit{Notation:}  $X^\prime$ denotes the transpose of matrix $X$. $\mathbb{S}_{+}^n$ and $\mathbb S_{++}^n$ are the sets of $n \times n$ positive semi-definite and positive definite matrices. When $X \in \mathbb{S}_{+}^n$, we simply write $X \geq 0$ (or $X > 0$ if $X \in \mathbb S_{++}^n$). %$f_{\mathrm{x}}(x)$ represents the probability density function of the random variable $\mathrm{x}$, and $f_{\mathrm{x}|\mathrm{y}}(x| y)$ denotes the pdf of a random variable $\mathrm{x}$ conditional on the variable $\mathrm{y}$.
$\mathcal{N}(\mu,\Sigma)$ denotes a Gaussian distribution with mean $\mu$ and covariance matrix $\Sigma$. $\mathbb{E}[\cdot]$ denotes the expectation, $\Pr(\cdot)$ denotes the probability of a random event, $\rho(\cdot)$ denotes the spectral radius of a matrix. \alert{$\mbox{diag}(X_1,\cdots,X_s)$ is the block diagonal matrix with square submatrices $X_1,\cdots,X_s$. $\mathbf{1}$ and $\mathbf{0}$ denote vectors with entries 1 and 0 respectively and $I_n$ is the identity matrix of size $n\times n$. Finally,  $\{A\}_0$ is the matrix obtained by deleting all $\mathbf{0}$ rows from the matrix $A$.}

\section{Problem Setup}\label{section:problem-setup}
We define the following linear system:
\begin{equation}\label{sys:model}
  {x}_{k+1} = A{x}_{k}+w_{k},~~~~~~~~~~ y_k^{(i)} =  C^{(i)}{x}_k+v_k^{(i)}, ~~~~i = 1,\cdots,m.  %\label{eqn:system-dynamics}%\label{eqn:sensor-observation} 
\end{equation}
Here \({x}_k \in \mathbb{R}^n\) is the state vector, \alert{while \(y_k^{(i)} \in \mathbb{R}^{s_i}\) is the $i$th of $m$ vector sensor measurements}. In addition, \(w_k\in \mathbb{R}^n\) and $v_k \triangleq  [v_k^{(1)\prime},\cdots, v_k^{(m)\prime}]^\prime \in \mathbb{R}^s$ are mutually uncorrelated Gaussian noises with covariances $Q > 0$ and $R > 0$, respectively \alert{and $s = \sum_{i=1}^m s_i$}. To simplify notation, we define ${y}_k \triangleq [y_k^{(1)\prime},\cdots, y_k^{(m)\prime}]^\prime $. The initial state ${x}_0$ is zero-mean Gaussian random variable with covariance matrix $\Sigma_0 > 0$, and is uncorrelated with $w_k$ and $v_k^{(i)}$ for all $k \geq 0$. We assume that  \((A,C)\) is detectable where we define $C \triangleq [C^{(1)\prime},\cdots, C^{(m)\prime}]^\prime$.

To reduce the rate of sensor to estimator communication, we intelligently transmit a fraction of our sensor measurements. Note that we choose to transfer sensor measurements as opposed to local estimates. This reduces computation by the sensor as well as  possibly the size of packets for \alert{$n>s_i$}. We specify $\gamma_k^{(i)}\in \{0,1\}$ as the binary decision variable for sensor $i$ at time $k$. When $\gamma_k^{(i)}=1$, a transmission occurs while when $\gamma_k^{(i)}=0$, no measurement is sent. Collecting our decision variables over $m$ sensors, we have $\gamma_k = [\gamma_k^{(1)},\cdots, \gamma_k^{(m)}]^{\prime} $. \alert{Also, suppose at each time $k$, $l_k$ sensors drop their measurements and} $m-l_k$ sensors transmit their measurements. The sensors which transmit have indices $p_1,\cdots,p_{m-l_k}$. Define the vector of received measurements $y_k^r \in \mathbb{R}^{m-l_k}$ at time $k$ by $y_k^r$ = $[y_k^{(p_1)\prime},\cdots,y_k^{(p_{m-l_k})\prime}]^\prime$.

To obtain a MMSE estimator given all previous and current measurements, we perform a two-step process. The first step is a time update where we obtain the MMSE estimator of $x_k$ given the information set up to time $k-1$. This is denoted by $\mathcal{I}_{k-1} \triangleq \{{\gamma_0},\cdots,{\gamma_{k-1}},y_0^r, \cdots, y_{k-1}^r \}$ where $\mathcal{I}_{-1} \triangleq \emptyset$. In the second step, we update our estimate of $x_k$, using our previous information set, the received measurements at time $k,~(y_k^r)$, and the knowledge that certain sensors did not transmit a measurement at time $k,~(\gamma_k)$. Thus, we update using $\mathcal{I}_{k}$. 

Given the information set, we define the following estimation parameters:
\begin{align}
  \hat x_k^- &\triangleq \mathbb{E}[x_k| \mathcal{I}_{k-1}],&
 P_k^- &\triangleq \mathbb{E}[(x_k-\hat{x}_k^-){(x_k-\hat{x}_k^-)}^{\prime}| \mathcal{I}_{k-1}], \nonumber \\
  \hat x_k &\triangleq \mathbb{E}[{x}_k| \mathcal{I}_k],&
  P_k &\triangleq \mathbb{E}[(x_k-\hat{x}_k){(x_k-\hat{x}_k)}^{\prime}|\mathcal{I}_k].
  \end{align}

Here $\hat{x}_k^{-}$ is an \textit{a priori} MMSE estimate and  $\hat{x}_k$ is an \textit{a posteriori} MMSE estimate. When all measurements are sent to the estimator, computation of $\hat{x}_k$ and $P_k$, the error covariance, reduces to the standard Kalman filter, where the Gaussian distribution of the state allows for a simple recursive filter. As done by \cite{2013CDC}, to maintain the Gaussian distribution of $x_k$, we consider a stochastic trigger. \alert{A stochastic trigger takes a measurement $y_k^{(i)}$ and computes a function $\varphi^{(i)}: \mathbb{R}^{s_i} \rightarrow [0,1]$ to determine the probability sensor $i$ does not transmit. While deterministic triggers assign probabilities equal to 1 or 0 for each measurement, the chosen trigger assigns probabilities in $[0,1]$. To do this, at time $k$, each sensor $i$ generates an i.i.d. uniform random variable $\zeta_k^{(i)}$ over $[0,\,1]$ and computes $\gamma_k^{(i)}$.}
\begin{equation}
  \gamma_k^{(i)} = \begin{cases}
    0&\zeta_k^{(i)}\leq \varphi^{(i)}(y_k^{(i)})\\
    1&\zeta_k^{(i)}> \varphi^{(i)}(y_k^{(i)})
  \end{cases},
  ~~~~~~~~\varphi^{(i)}(\alpha) \triangleq \exp\left(-\frac{1}{2}\alpha^{\prime} Y^{(i)} \alpha \right).
  \label{eq:generaltrigger}
\end{equation}
Here $Y^{(i)} \in \mathbb{S}_{++}^{s_i}$ are trigger parameters and we define $Y \in \mathbb{S}_{++}^{s}$ as $Y \triangleq  \diag{(Y^{(i)},\cdots,Y^{(m)})}$.  Note that $P(\gamma_k^{(i)}=0|y_k^{(i)})$ has the shape of a scaled Gaussian distribution. In the next section, we will show this allows the state to remain Gaussian. For the chosen trigger we consider stable systems, i.e. $\rho(A) < 1$. If the system is unstable, any sensor $i$ which measures an unstable state will have $y_k^{(i)}$ grow unbounded. In this case, by \eqref{eq:generaltrigger} sensor $i$ will always transmit. \footnote{In \cite{2013CDC}, a closed loop design is considered where $\alpha = y_k^{(i)} - \mathbb{E}[y_k^{(i)}]$. This design can handle unstable systems, but requires estimator to sensor communication at each step, which increases communication. As a result, we do not consider this approach.}

\section{MMSE Estimator Design}\label{section:main-results}
In this section, based on the design of $\varphi^{(i)}$, we obtain a closed-form solution to the MMSE estimation problem, given recursively by the following theorem:
\begin{theorem}\label{thm:open-mmse}
Consider remote state estimation with event-based scheduler \eqref{eq:generaltrigger} and define the matrix \alert{$\Psi_k \in \mathbb{R}^{s\times s}\triangleq \diag(\gamma_1 I_{s_1} \cdots \gamma_m I_{s_m} )$ to store the $m$ decision variables. Assume $f(x_0|\mathcal I_{-1}) \sim \mathcal N (0, \Sigma_0)$ so $\hat x^-_{0} = 0,\,P_0^- = \Sigma_0$. Then, $f(x_k|\mathcal I_k) \sim \mathcal{N}(\hat{x}_k,P_k)$ and $f(x_k|\mathcal I_{k-1}) \sim \mathcal{N}(\hat{x}_k^-,P_k^-)$ where $\hat x_k,\hat x_k^-$ and $P_k,\,P_k^-$ satisfy the following recursive equations:} \\
  Time update:
  \begin{equation}
    \hat{x}_k^-=A\hat{x}_{k-1},~~~~~~~~~~~ P_k^-=AP_{k-1}A'+Q,\label{eq:timeupdate2}
  \end{equation}
 Measurement update:
  \begin{align} 
    \label{eq:stateupdate}
    \hat{x}_k &= \hat{x}_k^- \nonumber \\ 
    & + P_k^-C^{\prime}(CP_k^-C^{\prime}+R + (I-\Psi_k)Y^{-1})^{-1}(\Psi_ky_k - C\hat{x}_k^-),\\
    \label{eq:covarianceupdate}
    P_k &=P_k^--P_k^-C^{\prime}(CP_k^-C^{\prime}+ R + (I-\Psi_k)Y^{-1})^{-1} CP_k^-,
  \end{align}
\end{theorem}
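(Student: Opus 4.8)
The plan is to prove the statement by induction on $k$, carrying the hypothesis that both conditional densities are Gaussian with the stated parameters. The base case $f(x_0\mid\mathcal I_{-1})\sim\mathcal N(0,\Sigma_0)$ is assumed, so suppose $f(x_{k-1}\mid\mathcal I_{k-1})\sim\mathcal N(\hat x_{k-1},P_{k-1})$. The time update is routine: since $x_k=Ax_{k-1}+w_k$ with $w_k$ Gaussian and independent of $\mathcal I_{k-1}$, the conditional density $f(x_k\mid\mathcal I_{k-1})$ is Gaussian, and matching the first two moments yields \eqref{eq:timeupdate2}. This reduces everything to the measurement update, which is where the stochastic trigger must be handled.

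For the measurement update I would apply Bayes' rule, writing $S_1=\{i:\gamma_k^{(i)}=1\}$ for the transmitting sensors and $S_0=\{i:\gamma_k^{(i)}=0\}$ for the dropping sensors. The newly acquired data $(\gamma_k,y_k^r)$ is conditionally independent of $\mathcal I_{k-1}$ given $x_k$, since it depends only on $v_k$ and the i.i.d.\ variables $\zeta_k^{(i)}$; hence $f(x_k\mid\mathcal I_k)\propto f(x_k\mid\mathcal I_{k-1})\,L(x_k)$, where $L(x_k)$ is the likelihood of the observed event. Because the $\zeta_k^{(i)}$ are independent across sensors, $\Pr(\gamma_k\mid y_k)=\prod_{i\in S_0}\varphi^{(i)}(y_k^{(i)})\prod_{i\in S_1}\big(1-\varphi^{(i)}(y_k^{(i)})\big)$, so, fixing the transmitted coordinates at their observed values and integrating out the unobserved ones,
\begin{equation}\label{eq:Lsketch}
L(x_k)=\Big[\prod_{i\in S_1}\big(1-\varphi^{(i)}(y_k^{(i)})\big)\Big]\int f(y_k\mid x_k)\prod_{i\in S_0}\varphi^{(i)}(y_k^{(i)})\,\prod_{i\in S_0}dy_k^{(i)}.
\end{equation}
The bracketed factor depends only on the received measurements, not on $x_k$, so it is absorbed into the normalizing constant.

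The core of the argument is the remaining integral in \eqref{eq:Lsketch}. Using $f(y_k\mid x_k)=\mathcal N(y_k;Cx_k,R)$ and $\prod_{i\in S_0}\varphi^{(i)}(y_k^{(i)})=\exp(-\tfrac12 y_k'(I-\Psi_k)Y\,y_k)$, the integrand is the exponential of a quadratic form in $(x_k,y_k)$; marginalizing out $\{y_k^{(i)}:i\in S_0\}$ is therefore a Gaussian integral that preserves the Gaussian shape in $x_k$, which already establishes $f(x_k\mid\mathcal I_k)\sim\mathcal N(\hat x_k,P_k)$. To identify the parameters I would complete the square in the dropped coordinates and show that, up to an $x_k$-independent factor,
\begin{equation}\label{eq:virtual}
L(x_k)\propto\exp\!\Big(-\tfrac12(\Psi_k y_k-Cx_k)'\big(R+(I-\Psi_k)Y^{-1}\big)^{-1}(\Psi_k y_k-Cx_k)\Big).
\end{equation}
In effect, each dropped sensor contributes as a virtual measurement equal to $0$ with its noise covariance inflated by $Y^{(i)-1}$, while each transmitting sensor contributes its ordinary measurement; the selection $\Psi_k y_k$ and the inflated covariance $R+(I-\Psi_k)Y^{-1}$ encode both cases at once. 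Given \eqref{eq:virtual}, the conclusion is immediate: $f(x_k\mid\mathcal I_k)$ is the Gaussian conditioning of the prior $\mathcal N(\hat x_k^-,P_k^-)$ against a linear-Gaussian observation with output matrix $C$, observed value $\Psi_k y_k$, and noise covariance $R+(I-\Psi_k)Y^{-1}$, and the standard update formulas reproduce \eqref{eq:stateupdate} and \eqref{eq:covarianceupdate}, closing the induction.

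I expect the main obstacle to be the matrix Gaussian marginalization that produces \eqref{eq:virtual}, chiefly because $R$ need not be block diagonal: conditioning on the transmitted coordinates couples them to the dropped ones, so the completion of the square must be carried out over the $S_0$ block while the $S_1$ block is held fixed, and one must verify that the resulting quadratic form collapses exactly to the claimed $\big(R+(I-\Psi_k)Y^{-1}\big)^{-1}$ with pseudo-measurement $\Psi_k y_k$. A convenient sanity check is the scalar single-sensor drop, where the integral of $\mathcal N(y;Cx,R)\exp(-\tfrac12 Yy^2)$ over $y$ reduces the effective observation to $0$ with variance $R+Y^{-1}$; the general case is the matrix analogue of this identity.
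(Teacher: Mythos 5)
Your proof is correct, and it takes a genuinely different route from the paper's. The paper never forms the likelihood of the observed event given $x_k$: it instead computes the joint posterior of the state \emph{and} the dropped measurements, writing $f(x_k,\Lambda_k y_k|\mathcal I_k)$ via Bayes' rule (with $\Lambda_k$ the selector of dropped coordinates), first conditioning the jointly Gaussian triple $(x_k,\Lambda_k y_k,\Gamma_k y_k)$ on the received block $\Gamma_k y_k$, then multiplying by the drop probability $\exp\left(-\tfrac{1}{2}y_k'\Lambda_k'\Lambda_k Y\Lambda_k'\Lambda_k y_k\right)$, and finally completing the square jointly in $(x_k,\Lambda_k y_k)$ --- the paper's Lemma 1, a page of selection-matrix algebra with repeated use of the matrix inversion lemma --- after which $\hat x_k$ and $P_k$ are read off the $x$-block of the resulting joint Gaussian. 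You instead integrate the dropped coordinates out \emph{first}, collapsing the update into textbook linear-Gaussian conditioning against the pseudo-measurement $\Psi_k y_k$ with noise covariance $R+(I-\Psi_k)Y^{-1}$; this is shorter and more explanatory, since it derives, rather than merely verifies, the Kalman-with-inflated-noise structure that the paper only observes after the fact in Remark 1. Moreover, the obstacle you flag (that $R$ need not be block diagonal, so conditioning couples transmitted and dropped channels) dissolves under a one-line probabilistic proof of your key identity: let $u_k\sim\mathcal N(0,Y^{-1})$ be independent of everything and set $z_k\triangleq y_k+(I-\Psi_k)u_k$. Up to constants independent of $x_k$, your integral is the conditional density of $z_k$ at the point $\Psi_k y_k$ (zeros in the dropped slots, observed values in the transmitted slots), and since $z_k=Cx_k+v_k+(I-\Psi_k)u_k$ with
\begin{equation*}
\Cov\left(v_k+(I-\Psi_k)u_k\right)=R+(I-\Psi_k)Y^{-1}(I-\Psi_k)=R+(I-\Psi_k)Y^{-1},
\end{equation*}
that density is exactly the Gaussian density $\mathcal N\left(Cx_k,\,R+(I-\Psi_k)Y^{-1}\right)$ evaluated at $\Psi_k y_k$, valid for arbitrary $R>0$. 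The trade-off: the paper's heavier computation also delivers the full joint Gaussian posterior $f(x_k,\Lambda_k y_k|\mathcal I_k)$, hence an MMSE estimate of the dropped measurements themselves, which your argument discards by marginalizing; for the theorem as stated, your route suffices and is cleaner.
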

\begin{proof}
 To simplify the proof of the theorem, we define the following notation which will allow us to distinguish among parameters associated with sent measurements versus dropped measurements. Suppose at time $k$, there exists $l_k$ sensors $j_1,\cdots,j_{l_k}$ that do not trigger a transmission and $m - l_k$ sensors, $p_1, \cdots, p_{m-l_k}$ which trigger a transmission. \alert{We define the matrix $\Gamma_k \in \mathbb{R}^{\left(\sum_{1=1}^{m-l_k} s_{p_i}\right) \times s}$ and $\bar{\Gamma}_k \in \mathbb{R}^{m-l_k \times m}$ to select sensors which transmit and $\Lambda_k \in \mathbb{R}^{\left(\sum_{1=1}^{l_k} s_{j_i}\right) \times s}$ and $\bar{\Lambda}_k  \in \mathbb{R}^{l_k \times m}$ to select sensors which do not transmit as 
\begin{align}
\Gamma_k = \{\Psi_k\}_0, ~~~~ \left(\bar{\Gamma}_k\right)_{u,v} \triangleq  \begin{cases}
    1 & v = p_u\\
    0 & otherwise
    \end{cases}, \nonumber \\
 \Lambda_k = \{I-\Psi_k\}_0, ~~~~
\left(\bar{\Lambda}_k\right)_{u,v} \triangleq  \begin{cases}
    1 & v = j_u\\
    0 & otherwise
  \end{cases}.
\end{align}
We prove Theorem \ref{thm:open-mmse} using induction on the distribution $f(x_k|\mathcal I_{k-1}) \sim \mathcal{N}(\hat{x}_k^-,P_k^-)$. \\
\textbf{Case $n = 0$:  }
For $n = 0$, we have $\mathcal I_{k-1} = \emptyset$. Thus, $f(x_0|\mathcal I_{-1}) = f(x_0) \sim \mathcal{N}(0,\Sigma_0)$ and the initial conditions holds. \\
\textbf{Case assume for $n = k$:  }  We assume that $f(x_k|\mathcal I_{k-1}) \sim \mathcal{N}(\hat{x}_k^-,P_k^-)$. \\
\textbf{Case prove for $n = k+1$:  } We first verify the measurement update step.}\\
{\it{Measurement Update Step}}:  Consider the joint conditional pdf of $x_k$ and $\Lambda_ky_k$ given $\mathcal{I}_{k}$ 
\begin{equation}
\begin{split}
  f(x_k,\Lambda_k&{y_k}|\mathcal I_k) = f(x_k,\Lambda_k{y_k}|y_k^r, \bar\Gamma_k\gamma_k = \mathbf{1}, \bar\Lambda_k\gamma_k = \mathbf{0},\mathcal{I}_{k-1}) \\
  & = f(x_k,\Lambda_ky_k|\Gamma_ky_k,\bar\Lambda_k\gamma_k = \mathbf{0}, \mathcal I_{k-1}), \\
  & =  \frac{\Pr(\bar\Lambda_k\gamma_k = \mathbf{0} |x_k,y_k,\mathcal I_{k-1})f(x_k,\Lambda_k y_k|\Gamma_k y_k,\mathcal I_{k-1})}{\Pr(\bar\Lambda_k \gamma_k=\mathbf{0}|\Gamma_ky_k,\mathcal I_{k-1})}.\\
\end{split} \label{eq:bayesrule}
  \end{equation}
The second equality follows since the knowledge of the values of sent measurements  $\Gamma_ky_k$ implies that the decision variables $\bar\Gamma_k\gamma_k = \mathbf{1}$. The last equality is derived from Bayes rule.  
      
  By our induction assumption, $f(x_k,\Lambda_ky_k,\Gamma_ky_k)$ is jointly Gaussian distributed given $\mathcal{I}_{k-1}$. As a result, the conditional distribution $f(x_k,~\Lambda_ky_k| \Gamma_ky_k,~ \mathcal{I}_{k-1})$ is also Gaussian. We first observe $f(x_k,\Lambda_ky_k,\Gamma_ky_k|\mathcal I_{k-1})$ has mean $[\hat x_k^{-\prime},(\Lambda_kC\hat x_k^-)^{\prime},(\Gamma_kC\hat x_k^-)^{\prime}]^\prime$ and covariance 
  
\begin{equation}
  \left[{\begin{array}{*{20}c}
      P_k^-& P_k^-C^{\prime}\Lambda_k^{\prime} & P_k^-C^{\prime}\Gamma_k^{\prime}\\
      \Lambda_kCP_k^- & \Lambda_k(CP_k^-C^{\prime}+R)\Lambda_k^{\prime} & \Lambda_k(CP_k^-C^{\prime}+R)\Gamma_k^{\prime} \\
      \Gamma_kCP_k^- &\Gamma_k (CP_k^-C^{\prime} +R)\Lambda_k^{\prime} & \Gamma_k(CP_k^-C^{\prime} + R)\Gamma_k^{\prime}
  \end{array}} \right].
\end{equation}

Given a joint Gaussian distribution $f(x_k,\Lambda_ky_k,\Gamma_ky_k|\mathcal I_{k-1})$, it is easy to compute \\ $f(x_k,\Lambda_ky_k|\Gamma_ky_k,\mathcal I_{k-1}) $ which is also Gaussian \cite{Scharf91}. The conditional means are
  \begin{align}
  \mu_x  &= \hat x_k^- + P_k^-(\Gamma_kC)^\prime(\Gamma_k(CP_k^-C^\prime+R)\Gamma_k^\prime)^{-1}\Gamma_k(y_k-C\hat x_k^-), \nonumber \\ 
  \end{align}
  \begin{align}
 & \mu_y  = \Lambda_kC\hat x_k^- \nonumber \\
&+ \Lambda_k(CP_k^-C^\prime+R)\Gamma_k^\prime  (\Gamma_k(CP_k^-C^\prime+R)\Gamma_k^\prime)^{-1} \Gamma_k(y_k-C\hat x_k^-)\label{eq:Cxplus equation}. 
  \end{align}
  Furthermore, the covariance of $x_k$ and $\Lambda_ky_k$ given $\Gamma_ky_k$ and $\mathcal I_{k-1}$ is 
  $\Phi_k = \left[ {\begin{array}{*{5}c}
   \Sigma_{xx}  & \Sigma_{xy}\\
   \Sigma_{xy}^\prime  & \Sigma_{yy}
  \end{array}} \right], $ where
  \begin{align}
  &\Sigma_{xx} = P_k^- - P_k^-(\Gamma_kC)^\prime(\Gamma_k(CP_k^-C^\prime+R)\Gamma_k^\prime)^{-1}(\Gamma_kC)P_k^-  \label{eq: Pkplus equation},\\
 &\Sigma_{yy} = \Lambda_k(CP_k^-C^\prime+R)\Lambda_k^\prime -  \nonumber \\
&\Lambda_k(CP_k^-C^\prime+R)\Gamma_k^\prime(\Gamma_k(CP_k^-C^\prime+R)\Gamma_k^\prime)^{-1}\Gamma_k(CP_k^-C^\prime+R)\Lambda_k^\prime,  \label{eq:sigmaxy}  \\
  &\Sigma_{xy} = P_k^-(\Lambda_kC)^\prime - \nonumber \\&P_k^-(\Gamma_kC)^\prime(\Gamma_k(CP_k^-C^\prime+R)\Gamma_k^\prime)^{-1}\Gamma_k(CP_k^-C^\prime+R)\Lambda_k^\prime. \label{eq:sigmayy}
  \end{align} 
  
Now that we have obtained $f(x_k,\Lambda_ky_k |\Gamma_ky_k,\mathcal{I}_{k-1})$, we also observe that
\begin{align}
\Pr(\bar \Lambda_k\gamma_k = \mathbf{0} |x_k,y_k,\mathcal I_{k-1}) &= \Pr(\bar \Lambda_k\gamma_k = \mathbf{0}|\Lambda_ky_k) \nonumber \\ &=  \exp \left( -\frac{1}{2}y_k^{\prime} \Lambda_k^\prime \Lambda_kY\Lambda_k^{\prime} \Lambda_ky_k\right).
\end{align}

Using \eqref{eq:bayesrule}, we can thus obtain the joint probability density function for the state $x_k$ and the dropped measurements $\Lambda_ky_k$. That is we have $f(x_k,\Lambda_ky_k|\mathcal I_k) =  \beta_k^{-1} \exp(-\dfrac{1}{2}\theta_k)$, where $\beta_k \in \mathbb{R}$ and $\theta_k \in \mathbb{R}$ are defined respectively as
      \begin{equation}
    \beta_k \triangleq {\Pr(\bar\Lambda_k\gamma_k = \mathbf 0|\Gamma_ky_k, \mathcal I_{k-1})\sqrt{\det(\Phi_k)(2\pi)^{n + \sum_{i=1}^{l_k} s_{j_i}  }}},
      \end{equation}
     \begin{align}
    \theta_k &\triangleq  \left[ {\begin{array}{*{20}c}
      x_k - \mu_x\\
      \Lambda_ky_k - \mu_y
    \end{array}} \right]'   \left[ {\begin{array}{*{5}c}
   \Sigma_{xx}  & \Sigma_{xy}\\
   \Sigma_{yx}  & \Sigma_{yy}
  \end{array}} \right]^{-1}   \left[ {\begin{array}{*{20}c}
      x_k - \mu_x \\
      \Lambda_ky_k - \mu_y
    \end{array}} \right]  \nonumber \\&+ (\Lambda_ky_k)^{\prime} \Lambda_kY\Lambda_k^{\prime}(\Lambda_ky_k).
    \label{eq:quadratic}
      \end{align}
    \alert{ We now introduce the following Lemma with proof found in the appendix.}
     \begin{lem}
    \alert{ The scalar $\theta_k \in \mathbb{R}$ is given by}
     \begin{equation}
    \theta_k =  \left[ {\begin{array}{*{20}c}
      x_k - \bar x_k\\
      \Lambda_ky_k - \bar y_k
    \end{array}} \right]' \Theta_k^{-1}\left[ {\begin{array}{*{20}c}
      x_k - \bar x_k\\
      \Lambda_ky_k - \bar y_k
    \end{array}} \right] + c_k,
      \end{equation}
      where $\bar x_k \in \mathbb{R}^n, \bar y_k \in \mathbb{R}^{ \sum_{i=1}^{l_k} s_{j_i}}, c_k \in \mathbb{R}$ and $\Theta_k \in \mathbb{S}_{++}^{\sum_{i=1}^{l_k} s_{j_i}+n}$ are given by
      \begin{align}
      \bar x_k &= \hat{x}^- + \nonumber \\ &P_k^-C^\prime(CP_k^-C^\prime + R + (I-\Psi_k)Y^{-1})^{-1}(\Psi_ky_k  - C\hat{x}_k^-), \label{eq:xbar}\\
    \bar y_k &= \left[ I + \Sigma_{yy}\Lambda_kY\Lambda_k^{\prime}\right]^{-1}\mu_y,
    c_k = \mu_y^\prime(\Sigma_{yy} + \Lambda_kY^{-1}\Lambda_k^{\prime})^{-1}\mu_y \label{eq:ybar}.
      \end{align}
       \begin{equation}
    \Theta_k =  \begin{bmatrix} \Theta_{xx,k} & \Theta_{xy,k} \\ \Theta_{xy,k}^{\prime} & \Theta{yy,k}
    \end{bmatrix},  \label{eq:THETAK}
    \end{equation}
   where
   \begin{align}
      \Theta_{xx,k} &= P_k^- - P_k^-C^\prime(CP_k^-C^\prime+R+(I-\Psi_k)Y^{-1})^{-1}CP_k^-, \nonumber
     \\ \Theta_{xy,k} &= \Sigma_{xy}(I + \Lambda_kY\Lambda_k^\prime \Sigma_{yy})^{-1}, \nonumber \\
     \Theta_{yy,k} &=  \left[\Sigma_{yy}^{-1}+\Lambda_kY\Lambda_k^{\prime}\right]^{-1}.
    \end{align}
\end{lem}
      Thus, the joint pdf of our state and unknown measurements are given as follows
      \begin{align}
    f(x_k,&\Lambda_ky_k|\mathcal I_{k}) = \frac{1}{\beta_k} \exp\left(-\frac{c_k}{2}\right) \nonumber \\ &\times \exp\left(-\frac{1}{2}  \left[ {\begin{array}{*{20}c}
      x_k - \bar x_k\\
      \Lambda_ky_k - \bar y_k
    \end{array}} \right]' \Theta_k^{-1}\left[ {\begin{array}{*{20}c}
      x_k - \bar x_k\\
      \Lambda_ky_k - \bar y_k
    \end{array}} \right]  \right).
      \end{align}
      Since $f(x_k,\Lambda_ky_k|\mathcal I_k)$ is a pdf, its integral normalizes to one which implies that $f(x_k,\Lambda_ky_k|\mathcal I_k)$ are jointly Gaussian. \alert{Moreover, this implies that $x_k$ is conditionally Gaussian given $\mathcal I_k$ with mean $\hat x_k$ and covariance $P_k$.} Therefore, \eqref{eq:stateupdate} and \eqref{eq:covarianceupdate} hold for the measurement update step.\\
      {\it{Time Update Step}}:  We have proved $f(x_k|\mathcal I_k) \sim \mathcal{N}(\hat{x}_k,P_k)$.  By the conditional independence of $x_k$ and $w_k$, we can verify the time update step
  \begin{equation}
    f(x_{k+1}|\mathcal I_k) = f(Ax_{k}+w_k|\mathcal I_k)  \thicksim \mathcal N(A\hat{x}_k, AP_kA'+Q).
  \end{equation}
  \alert{Thus, \eqref{eq:timeupdate2} holds. By induction, $f(x_k|\mathcal I_{k-1}) \sim \mathcal{N}(\hat{x}_k^-,P_k^-)$. Moreover, from this result, and the proof of the measurement update step, $f(x_k|\mathcal{I}_k) \sim \mathcal{N}(\hat{x}_k,P_k)$, which concludes the proof.}
\end{proof}

\begin{rem}
The estimation filter can be formulated as a Kalman filter with time-varying sensor noise $R+(I-\Psi_k)Y^{-1}$ and innovation $\Psi_ky_k - C\hat{x}_k^-$. The similarity between the stochastic schedule and Kalman filtering allows for computational simplicity and easy implementation.
\end{rem}
\begin{rem}
\alert{With an imperfect channel, the estimator will have to differentiate between intended packet drops by the sensor due to the stochastic trigger and unintended drops due to the channel.  If packet drops are IID Bernoulli, the state will be distributed according to a Gaussian mixture model corresponding to each possible trajectory of $\gamma_k$. The resulting distribution however is intractable as $k \rightarrow \infty$.}
\end{rem}

\section{Performance Analysis}\label{section:performance-analysis}
In proposing an event-based trigger, our goal is to address the trade-off between estimation performance and power consumed through communication by sensor nodes.

The communication rate $\lambda^{(i)} \in [0,1]$ for sensor $i$ can be defined as
\begin{equation}
  \lambda^{(i)}\triangleq \limsup_{T\rightarrow\infty} \frac{1}{T+1}\sum_{k=0}^T\mathbb{E}[\gamma_k^{(i)}]. \label{eq:comm rate}
\end{equation}
Knowledge of the communication rate $\lambda^{(i)}$ of each sensor will allow designers to determine the required system bandwidth and to estimate the lifetime of each sensor. \alert{To obtain an expression for the communication rate $\lambda^{(i)}$ for each sensor, we first define $\Sigma \in \mathbb{S}_{++}^n, \Pi^{(i)} \in \mathbb{S}_{++}^{s_i} $ by
\begin{align*}
  \Sigma &\triangleq\lim_{k\rightarrow \infty}\Cov(x_k) = A\Sigma A'+Q , \\ \Pi^{(i)} & \triangleq \lim_{k\rightarrow \infty}\Cov(y_k^{(i)})= C^{(i)} \Sigma C^{(i)\prime}+R^{(i)},
\end{align*}
where $R^{(i)} \triangleq \mathbb{E} [v_k^{(i)}v_k^{(i)\prime}]$.}
With these results, we now can arrive at an expression for the communicate rate of each sensor with proof in  \cite{2013CDC} .
\begin{theorem}
  \label{theorem:openrate}
  Consider a stable linear system \eqref{sys:model} with a stochastic event-based sensor schedule given by \eqref{eq:generaltrigger}. The communication rate $\lambda^{(i)}$ for each sensor $i = 1,\cdots,m $ is given by
 \begin{equation}
   \alert{ \lambda^{(i)}=1-\frac{1}{\sqrt{\det\left(I+\Pi^{(i)} Y^{(i)}\right)}}.} \label{eq:comm rates}
  \end{equation}
\end{theorem}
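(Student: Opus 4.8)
The plan is to reduce the computation to a stationary single-sensor expectation, since each trigger $\varphi^{(i)}$ depends only on sensor $i$'s own measurement $y_k^{(i)}$. First I would observe that, conditioned on $y_k^{(i)}$, the decision $\gamma_k^{(i)}$ is Bernoulli: because $\zeta_k^{(i)}$ is uniform on $[0,1]$ and independent of $y_k^{(i)}$, the rule \eqref{eq:generaltrigger} gives $\Pr(\gamma_k^{(i)}=1\mid y_k^{(i)}) = 1-\varphi^{(i)}(y_k^{(i)})$. Taking expectations via the tower property yields
\[
\mathbb{E}[\gamma_k^{(i)}] = 1 - \mathbb{E}\!\left[\exp\!\left(-\tfrac{1}{2}\,y_k^{(i)\prime} Y^{(i)} y_k^{(i)}\right)\right].
\]

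Next I would establish that $y_k^{(i)}$ is zero-mean Gaussian whose covariance converges to $\Pi^{(i)}$. Since $x_0$, $w_k$, and $v_k^{(i)}$ are all zero-mean, $\mathbb{E}[x_k]=0$ and hence $\mathbb{E}[y_k^{(i)}]=0$ for every $k$; moreover $\Cov(x_k)$ obeys the Lyapunov recursion $\Sigma_k = A\Sigma_{k-1}A' + Q$, which converges to $\Sigma$ because $\rho(A)<1$, so that $\Cov(y_k^{(i)}) = C^{(i)}\Sigma_k C^{(i)\prime} + R^{(i)} \to \Pi^{(i)}$.

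The core computation is then the stationary Gaussian integral $\mathbb{E}[\exp(-\tfrac{1}{2}y'Y^{(i)}y)]$ for $y\sim\mathcal N(0,\Pi^{(i)})$. I would evaluate it by multiplying the density of $y$ by the trigger kernel, merging the two quadratic forms into one exponent with precision matrix $(\Pi^{(i)})^{-1}+Y^{(i)}$, and recognizing the residual integral as a Gaussian normalization constant. This gives
\[
\mathbb{E}\!\left[\exp\!\left(-\tfrac{1}{2}\,y'Y^{(i)}y\right)\right] = \frac{1}{\sqrt{\det(\Pi^{(i)})\,\det\!\left((\Pi^{(i)})^{-1}+Y^{(i)}\right)}} = \frac{1}{\sqrt{\det\!\left(I+\Pi^{(i)}Y^{(i)}\right)}},
\]
where the last equality uses $\det(\Pi^{(i)})\det((\Pi^{(i)})^{-1}+Y^{(i)}) = \det(\Pi^{(i)}((\Pi^{(i)})^{-1}+Y^{(i)})) = \det(I+\Pi^{(i)}Y^{(i)})$. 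Hence $\lim_{k\to\infty}\mathbb{E}[\gamma_k^{(i)}]$ equals the claimed value.

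Finally I would upgrade this pointwise limit of $\mathbb{E}[\gamma_k^{(i)}]$ to the Cesàro average appearing in \eqref{eq:comm rate}: if a sequence converges, its running average converges to the same limit, so the $\limsup$ defining $\lambda^{(i)}$ coincides with $\lim_{k\to\infty}\mathbb{E}[\gamma_k^{(i)}]$, establishing \eqref{eq:comm rates}. The main obstacle is the Gaussian integral together with the determinant identity, since the remaining arguments are bookkeeping; a secondary subtlety is justifying both the convergence $\Cov(y_k^{(i)})\to\Pi^{(i)}$ and the continuous dependence of the integral on this covariance, both of which follow from $\rho(A)<1$ and the smoothness of the Gaussian expectation in its covariance parameter.
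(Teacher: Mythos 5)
Your proof is correct and follows essentially the same route as the proof the paper defers to \cite{2013CDC}: condition on $y_k^{(i)}$ via the tower property, evaluate the scaled Gaussian integral to get $\mathbb{E}[\gamma_k^{(i)}] = 1-\det\bigl(I+\Pi_k^{(i)}Y^{(i)}\bigr)^{-1/2}$ with $\Pi_k^{(i)} = C^{(i)}\Sigma_k C^{(i)\prime}+R^{(i)}$, and pass to the Ces\`aro limit using $\rho(A)<1$. The key observation you exploit---that each trigger depends only on its own (unconditionally Gaussian, since scheduling does not feed back into the dynamics) measurement, so the multi-sensor rate decouples into $m$ copies of the single-sensor computation---is precisely why the paper can simply cite the single-sensor result.
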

 We next verify that the properties established for the expected communication rate over several runs, apply to a single sample path, the proof of which is found in \cite{2013CDC}.

\begin{theorem}
The following equality almost surely holds.
\begin{equation}
\lim_{N \rightarrow \infty}  \frac{1}{T+1} \sum_{k=0}^T \gamma_k^{(i)} \overset{a.s}{=} \lambda^{(i)}.
\end{equation}

Furthermore, for any finite integer $l \ge 0$, define the event of $l$ sequential packed drops over all $m$ sensors   $\overline{E}_{k,l}$ and the event of $l$ sequential packet arrivals over all $m$ sensors $\underline{E}_{k,l}$ as follows
\begin{align*}
\overline{E}_{k,l} \triangleq \{\gamma_k = \mathbf{0},\cdots,\gamma_{k+l-1} = \mathbf{0} \}, \\
 \underline{E}_{k,l} \triangleq \{\gamma_k = \mathbf{1},\cdots,\gamma_{k+l-1} = \mathbf{1} \}.
\end{align*}
Then almost surely $\underline{E}_{k,l}$ and $\overline{E}_{k,l}$ happen infinitely often.
\label{samplecom}
\end{theorem}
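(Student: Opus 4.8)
Both assertions rest on a single structural fact: since $\rho(A)<1$ and $Q>0$, the state process $\{x_k\}$ is an irreducible, aperiodic, positive Harris recurrent Markov chain that is geometrically ergodic with unique invariant law $\mathcal{N}(0,\Sigma)$, where $\Sigma = A\Sigma A'+Q$. I would exploit this to replace the independence that is unavailable here. For the almost sure rate, the difficulty is precisely that the $\gamma_k^{(i)}$ are \emph{not} independent across $k$: through $y_k^{(i)}=C^{(i)}x_k+v_k^{(i)}$ they inherit the temporal correlation of the state, so the ordinary strong law does not apply. My plan is to realize $\gamma_k^{(i)}$ as a bounded measurable functional of an ergodic chain. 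Setting $\Phi_k\triangleq(x_k,v_k,\zeta_k)$, the noises $v_k$ and randomizers $\zeta_k$ are i.i.d.\ and independent of the past, so $\{\Phi_k\}$ is Markov with invariant law $\mathcal{N}(0,\Sigma)\otimes\mathcal{N}(0,R)\otimes\mathrm{Unif}[0,1]^m$, and by \eqref{eq:generaltrigger} the variable $\gamma_k^{(i)}$ equals $1$ when $\zeta_k^{(i)}>\varphi^{(i)}(C^{(i)}x_k+v_k^{(i)})$ and $0$ otherwise, a bounded function of $\Phi_k$. I would then invoke the strong law of large numbers for ergodic Markov chains to obtain $\frac{1}{T+1}\sum_{k=0}^{T}\gamma_k^{(i)}\to\mathbb{E}_\pi[\gamma^{(i)}]$ almost surely, with geometric ergodicity guaranteeing this holds from the given start $x_0\sim\mathcal{N}(0,\Sigma_0)$, not only from stationarity. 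Finally I identify the limit: since $\zeta^{(i)}$ is uniform and independent, $\mathbb{E}_\pi[\gamma^{(i)}]=1-\mathbb{E}[\varphi^{(i)}(y^{(i)})]$ with $y^{(i)}\sim\mathcal{N}(0,\Pi^{(i)})$, and the Gaussian integral $\mathbb{E}[\exp(-\tfrac12 y^{(i)\prime}Y^{(i)}y^{(i)})]=\det(I+\Pi^{(i)}Y^{(i)})^{-1/2}$ recovers exactly the $\lambda^{(i)}$ of Theorem \ref{theorem:openrate}.

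\textbf{Infinitely often.} Here the block events are dependent and, worse, their conditional probabilities are random functions of the state, so no uniform positive lower bound exists and a naive second Borel-Cantelli fails. The plan is to combine recurrence of $\{x_k\}$ with Lévy's conditional Borel-Cantelli lemma. Let $\mathcal{F}_{k-1}$ be the $\sigma$-algebra of all data through time $k-1$. Conditioned on $x_{k-1}=\bar x$, the future block $(y_k,\dots,y_{k+l-1})$ is jointly Gaussian with mean affine in $\bar x$ and fixed covariance, so
\[
\Pr(\overline{E}_{k,l}\mid\mathcal{F}_{k-1})=\mathbb{E}\!\left[\prod_{j=0}^{l-1}\prod_{i=1}^{m}\varphi^{(i)}\!\left(y_{k+j}^{(i)}\right)\,\middle|\,x_{k-1}\right]
\]
is a strictly positive, continuous function of $\bar x$. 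Fixing a ball $B$ of positive stationary measure, this function attains a positive minimum $\delta_B>0$ on $B$; by positive recurrence, $\{x_{k-1}\in B\}$ occurs for infinitely many $k$ almost surely, whence $\sum_k \Pr(\overline{E}_{k,l}\mid\mathcal{F}_{k-1})\ge \delta_B\cdot\#\{k:x_{k-1}\in B\}=\infty$ almost surely. Lévy's extension of Borel-Cantelli then yields $\overline{E}_{k,l}$ infinitely often a.s. The identical argument with $\prod_{j,i}(1-\varphi^{(i)}(y_{k+j}^{(i)}))$ in place of $\prod_{j,i}\varphi^{(i)}(y_{k+j}^{(i)})$ — still positive almost everywhere, hence of positive expectation and continuous in $\bar x$ — disposes of $\underline{E}_{k,l}$.

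\textbf{Main obstacle.} The Gaussian and linear-algebra computations are routine; the crux of both parts is substituting ergodicity and recurrence for the independence that is genuinely absent. Concretely, the load-bearing steps are (i) establishing Harris recurrence and geometric ergodicity of $\{x_k\}$ from $\rho(A)<1$ and $Q>0$, so that the Markov strong law applies irrespective of the initial law and converges to the stationary expectation; and (ii) for the recurrence claim, recognizing that the state-dependent conditional drop/transmit probability cannot be bounded below globally but \emph{can} be on a recurrent compact set, which is exactly the input the conditional Borel-Cantelli lemma requires. I expect (ii) to be the more delicate point, since it is where the interplay between the positivity of the trigger $\varphi^{(i)}$ and the recurrence of the chain must be made quantitative.
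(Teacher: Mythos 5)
The paper itself contains no proof of this theorem---it explicitly defers to reference \cite{2013CDC}---so your proposal has to be judged on its own merits rather than against an in-paper argument. Your first part is sound: the augmented process $z_k=(x_k,v_k,\zeta_k)$ is a Markov chain whose state component is a Gaussian AR(1) recursion with $\rho(A)<1$ and full-rank noise $Q>0$, hence Lebesgue-irreducible, aperiodic, and positive Harris; $\gamma_k^{(i)}$ is a bounded measurable functional of $z_k$; the law of large numbers for positive Harris chains holds from an arbitrary initial law (so starting from $\Sigma_0$ rather than $\Sigma$ is harmless); and the stationary expectation $1-\mathbb{E}[\exp(-\tfrac{1}{2}y^{(i)\prime}Y^{(i)}y^{(i)})]=1-\det(I+\Pi^{(i)}Y^{(i)})^{-1/2}$ indeed matches the $\lambda^{(i)}$ of Theorem \ref{theorem:openrate}.

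Your second part, however, has a genuine (though repairable) flaw: L\'evy's extension of the Borel--Cantelli lemma requires the events to be adapted to the filtration used, i.e.\ $A_k\in\mathcal{F}_k$, with divergence of $\sum_k\Pr(A_k\mid\mathcal{F}_{k-1})$. Your events $\overline{E}_{k,l}$ depend on $\gamma_k,\dots,\gamma_{k+l-1}$ and are therefore $\mathcal{F}_{k+l-1}$-measurable, not $\mathcal{F}_k$-measurable; for $l\ge 2$ the overlapping blocks violate exactly the adaptedness hypothesis, so the lemma as you invoke it does not apply. The standard repair is blocking: restrict attention to $k=jl$, set $\mathcal{G}_j\triangleq\mathcal{F}_{(j+1)l-1}$, so that $\overline{E}_{jl,l}\in\mathcal{G}_j$ and, by the Markov property, $\Pr(\overline{E}_{jl,l}\mid\mathcal{G}_{j-1})=\Pr(\overline{E}_{jl,l}\mid\mathcal{F}_{jl-1})\ge\delta_B$ on the event that $x_{jl-1}\in B$. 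This costs you one additional fact: you now need the $l$-skeleton $\{x_{jl-1}\}_{j\ge 1}$, not the full chain, to visit $B$ infinitely often. That is true for the same reason as before---the skeleton is itself a Gaussian AR(1) chain, $x_{(j+1)l-1}=A^{l}x_{jl-1}+\tilde w_j$ with $\rho(A^{l})<1$ and i.i.d.\ noise of covariance $\sum_{i=0}^{l-1}A^{i}QA^{i\prime}>0$, hence positive Harris---but it must be said. With this modification, and the same blocking applied to $\underline{E}_{k,l}$ (where your observation that $1-\varphi^{(i)}$ is positive Lebesgue-almost everywhere gives strict positivity and continuity of the conditional block probability in $\bar x$), your argument goes through.
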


\alert{We next examine the estimation performance by analyzing the statistical properties of $P_k^-$.}
\begin{theorem}
Consider a stable system \eqref{sys:model} with scheduler given by \eqref{eq:generaltrigger}. Let
\begin{align*}
  g_W(X)&\triangleq AXA'+Q-AXC'(CXC'+W)^{-1}CXA'.
\end{align*}
\begin{enumerate}
\item There exists an $M \in S_{++}^n$, such that for all $k$, $P_k^-$ is uniformly bounded above by $M$.
\item For any $\epsilon > 0$, there exists an $N$ such that for all $k \ge N$, the following inequalities hold
\begin{equation}
\underline{X} - \epsilon I \le P_k^- \le \overline{X} + \epsilon I, 
\end{equation}
where $\underline{X}$ and $\overline{X}$ are the unique solutions $X= g_R(X)$ and $X = g_{R+Y^{-1}}(X)$ respectively.
\item For any $\epsilon > 0$, almost surely  for infinitely many $k's$, we have $P_k^- \ge \overline{X} - \epsilon I$ and almost surely for infinitely many $k's$, we have $P_k^- \le \underline{X} + \epsilon I$.
\end{enumerate}
\end{theorem}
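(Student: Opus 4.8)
The plan is to treat the recursion as a Kalman filter with a time-varying measurement-noise covariance $W_k \triangleq R + (I-\Psi_k)Y^{-1}$, as noted in the remark following Theorem \ref{thm:open-mmse}. Composing the time and measurement updates gives $P_k^- = g_{W_k}(P_{k-1}^-)$, and since $\mathbf{0} \le I-\Psi_k \le I$ for every realization of $\gamma_k$, the effective noise always satisfies $R \le W_k \le R + Y^{-1}$. The two ingredients I would establish first are the standard monotonicity properties of $g_W$: for fixed $W$, $X_1 \le X_2$ implies $g_W(X_1) \le g_W(X_2)$; and for fixed $X$, $W_1 \le W_2$ implies $g_{W_1}(X) \le g_{W_2}(X)$. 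Both follow from the variational representation $g_W(X) = \min_K \{A[(I-KC)X(I-KC)' + KWK']A' + Q\}$, since each bracketed expression is monotone in $X$ and in $W$ and the pointwise minimum inherits monotonicity. Because $(A,C)$ is detectable and $\rho(A)<1$ makes $(A,Q^{1/2})$ stabilizable, the fixed-point equations $X = g_R(X)$ and $X = g_{R+Y^{-1}}(X)$ have unique stabilizing solutions $\underline{X}$ and $\overline{X}$, and the iterations $g_R^k(X_0)$ and $g_{R+Y^{-1}}^k(X_0)$ converge to them from any $X_0 \ge 0$.

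For part (1) I would discard the measurement correction: since $P_{k-1} \le P_{k-1}^-$ and conjugation by $A$ preserves the order, $P_k^- = AP_{k-1}A' + Q \le AP_{k-1}^- A' + Q$. Iterating yields $P_k^- \le A^k \Sigma_0 (A')^k + \sum_{j=0}^{k-1} A^j Q (A')^j$, whose right-hand side converges because $\rho(A)<1$, so a uniform upper bound $M$ exists. For part (2) I would sandwich $P_k^-$ between the two extreme Riccati sequences $\overline{P}_k^- = g_{R+Y^{-1}}(\overline{P}_{k-1}^-)$ and $\underline{P}_k^- = g_R(\underline{P}_{k-1}^-)$ started from the same initial condition. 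An induction using both monotonicity properties gives $\underline{P}_k^- \le P_k^- \le \overline{P}_k^-$ for all $k$, and since $\overline{P}_k^- \to \overline{X}$ and $\underline{P}_k^- \to \underline{X}$, the stated $\epsilon$-band holds for all $k \ge N$.

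Part (3) is the main obstacle and is where Theorem \ref{samplecom} enters: a long run of consecutive drops drives $P_k^-$ toward $\overline{X}$, while a long run of consecutive arrivals drives it toward $\underline{X}$. I would first note $P_k^- = AP_{k-1}A' + Q \ge Q > 0$, so together with part (1) every $P_k^-$ lies in the compact set $\mathcal{K} = \{X : Q \le X \le M\}$. The crux is a uniform-convergence statement: for each $\epsilon$ there is an $l$ with $\|g_{R+Y^{-1}}^l(X) - \overline{X}\| < \epsilon$ and $\|g_R^l(X) - \underline{X}\| < \epsilon$ for every $X \in \mathcal{K}$, obtained by upgrading pointwise Riccati convergence to uniform convergence on $\mathcal{K}$ via a compactness and continuity argument. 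Fixing such an $l$, Theorem \ref{samplecom} guarantees that $\overline{E}_{k,l}$ and $\underline{E}_{k,l}$ each occur for infinitely many $k$ almost surely. Whenever $\overline{E}_{k,l}$ occurs, the $l$ consecutive drops give $P_{k+l}^- = g_{R+Y^{-1}}^l(P_k^-)$ with $P_k^- \in \mathcal{K}$, hence $P_{k+l}^- \ge \overline{X} - \epsilon I$; symmetrically each occurrence of $\underline{E}_{k,l}$ yields $P_{k+l}^- \le \underline{X} + \epsilon I$. Recurrence of these events then delivers both conclusions for infinitely many $k$ almost surely. I expect the uniform-convergence step to demand the most care, since a single $l$ must work for all starting points in $\mathcal{K}$ rather than merely pointwise.
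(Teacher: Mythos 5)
Your overall route---composing the two updates of Theorem \ref{thm:open-mmse} into $P_k^- = g_{W_{k-1}}(P_{k-1}^-)$ with $R \le W_{k-1} = R+(I-\Psi_{k-1})Y^{-1} \le R+Y^{-1}$, monotonicity of $g_W$ in both arguments via the variational representation, sandwiching between the two extreme Riccati iterations for parts (1)--(2), and combining Theorem \ref{samplecom} with an $l$-step convergence argument for part (3)---is exactly the argument the paper intends. The paper itself states this theorem without proof (the single-sensor versions are deferred to \cite{2013CDC}, and part (3) is flagged only as ``a consequence of Theorem \ref{samplecom}''), but the same machinery appears in essentially the same form in the appendix proof of Corollary 1, which is the in-paper model for this result. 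Your parts (1) and (2) are correct as written.

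The soft spot is the uniform-convergence lemma in part (3), specifically the way you propose to prove it. Pointwise convergence of continuous maps on a compact set does \emph{not} imply uniform convergence: Dini's theorem needs monotonicity in $l$, which the iterates $g^l(X)$ started from an arbitrary $X \in \mathcal{K}$ do not possess, and you establish no equicontinuity, so ``compactness and continuity'' is not a proof. Fortunately you do not need uniform convergence at all; the monotonicity you already proved does the job. Since every $X \in \mathcal{K}$ satisfies $0 \le X \le M$, you have $g_{R+Y^{-1}}^l(P_k^-) \ge g_{R+Y^{-1}}^l(0)$ and $g_R^l(P_k^-) \le g_R^l(M)$, so it suffices to choose $l$ from \emph{pointwise} convergence at the two fixed starting points $0$ and $M$: pick $l$ with $g_{R+Y^{-1}}^l(0) \ge \overline{X} - \epsilon I$ and $g_R^l(M) \le \underline{X} + \epsilon I$. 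Then each occurrence of $\overline{E}_{k,l}$ gives $P_{k+l}^- = g_{R+Y^{-1}}^l(P_k^-) \ge \overline{X} - \epsilon I$, each occurrence of $\underline{E}_{k,l}$ gives $P_{k+l}^- \le \underline{X} + \epsilon I$, and Theorem \ref{samplecom} supplies infinitely many such $k$ almost surely. This is precisely the maneuver in the paper's Corollary 1 proof, where $\bar h^l(P_k) \ge \bar h^l(0)$ and Proposition 1 of \cite{2013CDC} play these roles. Incidentally, the same order sandwich $g^l(0) \le g^l(X) \le g^l(M)$, together with convergence at the endpoints, shows your uniform-convergence claim is in fact true---but the proof is by monotonicity, not compactness, and once this is observed the lemma becomes superfluous.
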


The first statement shows that regardless of the choice of $Y^{(i)}$ (communication rate), the error covariance is bounded. The second statement obtains upper and lower bounds while the third statement shows that during a sample path, $P_k^-$ will approach these bounds infinitely many times, a consequence of Theorem \ref{samplecom}, where we expect  long strings of transmissions and drops.

\section{Optimization of Trigger Parameters} \label{section:optimization}
\alert{Before we continue, we introduce the following Corollary with proof found in the appendix.
\begin{cor}
Define $\overline{P} \triangleq \overline{X} - \overline{X}C^\prime(C\overline{X}C'+R+Y^{-1})^{-1}C\overline{X}.$
\begin{enumerate}
\item For any $\epsilon > 0$, $\exists$ an $N$ such that for all $k \ge N$,~$P_k \le \overline{P} + \epsilon I$. 
\item For any $\epsilon > 0$, almost surely  for infinitely many $k's$, we have $P_k \ge \overline{P} - \epsilon I$
\end{enumerate}
\end{cor}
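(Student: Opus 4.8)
The plan is to route both $\overline{P}$ and the posterior covariance $P_k$ through a single measurement-update map
\[
h_W(X)\triangleq X-XC'(CXC'+W)^{-1}CX=(X^{-1}+C'W^{-1}C)^{-1},
\]
so that $\overline{P}=h_{R+Y^{-1}}(\overline{X})$ while, by \eqref{eq:covarianceupdate}, $P_k=h_{W_k}(P_k^-)$ with $W_k\triangleq R+(I-\Psi_k)Y^{-1}$. Every $P_k^-$ is positive definite (indeed $P_k^-\ge Q>0$ after one time update), so the information form on the right is well defined. From it, $h_W(X)$ is continuous and monotonically nondecreasing in both $X$ and $W$ in the positive semidefinite order, since enlarging either argument only shrinks $X^{-1}+C'W^{-1}C$. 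These two facts, combined with the bounds on $P_k^-$ from the preceding theorem, drive the whole argument.

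For statement 1, note $\Psi_kY^{-1}\ge 0$ gives $(I-\Psi_k)Y^{-1}\le Y^{-1}$, hence $W_k\le R+Y^{-1}$, and monotonicity in $W$ yields $P_k\le h_{R+Y^{-1}}(P_k^-)$. Fixing $\epsilon>0$, I would use continuity of $h_{R+Y^{-1}}$ at $\overline{X}$ to pick $\epsilon'>0$ with $h_{R+Y^{-1}}(\overline{X}+\epsilon' I)\le\overline{P}+\epsilon I$; statement 2 of the preceding theorem then supplies an $N$ with $P_k^-\le\overline{X}+\epsilon' I$ for $k\ge N$, and monotonicity in $X$ closes the chain $P_k\le h_{R+Y^{-1}}(P_k^-)\le h_{R+Y^{-1}}(\overline{X}+\epsilon' I)\le\overline{P}+\epsilon I$.

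For statement 2 I would work along a sample path, exploiting the long drop strings guaranteed almost surely by Theorem~\ref{samplecom}. Composing \eqref{eq:timeupdate2} and \eqref{eq:covarianceupdate} with $\Psi=0$ shows that inside an all-drop window the a priori recursion is exactly $P_{j+1}^-=g_{R+Y^{-1}}(P_j^-)$, whose unique fixed point is $\overline{X}$. Under detectability this recursion converges to $\overline{X}$ geometrically and uniformly over the compact set $\{0\le X\le M\}$ that contains every $P_k^-$ (statement 1 of the preceding theorem), so there is a window length $l_0$ after which $P^-$ is within $\epsilon'$ of $\overline{X}$, in particular $P^-\ge\overline{X}-\epsilon' I$. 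Since $\overline{E}_{k,l_0+1}$ occurs for infinitely many $k$, at the last drop $k^\star$ of each such window we have simultaneously $P_{k^\star}^-\ge\overline{X}-\epsilon' I$ and $\Psi_{k^\star}=0$, so that $P_{k^\star}=h_{R+Y^{-1}}(P_{k^\star}^-)\ge h_{R+Y^{-1}}(\overline{X}-\epsilon' I)$; shrinking $\epsilon'$ makes the right side $\ge\overline{P}-\epsilon I$, giving the claim for infinitely many $k$ almost surely.

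The hard part is the coupling in statement 2: I must force ``$P_k^-$ near $\overline{X}$'' and ``all sensors drop at time $k$'' to occur at the \emph{same} index infinitely often. Quoting statement 3 of the preceding theorem directly is not enough, since it says nothing about $\Psi_k$ at the instants when $P_k^-\approx\overline{X}$. Looking only at the final step of a sufficiently long drop window fixes this, but it forces the convergence of $g_{R+Y^{-1}}$ to be \emph{uniform} over all admissible starting covariances, so that one length $l_0$ works no matter where $P^-$ stood when the window opened. Establishing that uniform (rather than merely pointwise) geometric convergence is the genuine technical obstacle; everything else is monotonicity and continuity bookkeeping.
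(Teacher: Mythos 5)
Your treatment of statement 1 is exactly the paper's: write $P_k$ as the measurement-update map applied to $P_k^-$, note that the map is monotone in both arguments (via the information form) and maximized in the all-drop case $\Psi_k=\mathbf{0}$, then combine statement 2 of Theorem 4 with continuity at $\overline{X}$. Your statement 2 strategy---coupling ``$P_k^-$ near $\overline{X}$'' with ``all sensors drop'' by inspecting the last step of a long all-drop window supplied by Theorem \ref{samplecom}---is also the paper's strategy, and your observation that statement 3 of Theorem 4 alone cannot provide this coupling is correct.

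However, your statement 2 has a gap that you yourself flag: you invoke uniform geometric convergence of the all-drop Riccati iteration $g_{R+Y^{-1}}^{l}$ over the set $\{0\le X\le M\}$ of admissible starting covariances, and never establish it. The paper sidesteps this need entirely with a one-sided comparison. Since the covariance entering the window is positive semidefinite and the all-drop a posteriori map $\bar h(X)=h(AXA'+Q,\mathbf 0)$ is monotone, the covariance at the end of an $l$-step all-drop window dominates the zero-initialized iterate, $P_{k+l}=\bar h^{l}(P_k)\ge \bar h^{l}(0)$; and $\bar h^{l}(0)$ converges from below to the fixed point $\overline{P}$ (Proposition 1 of \cite{2013CDC}), so one simply fixes $l$ with $\bar h^{l}(0)\ge \overline{P}-\epsilon I$ and applies Theorem \ref{samplecom} to the event $\overline{E}_{k,l}$. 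Because the claim is one-sided (a lower bound, infinitely often), pointwise monotone convergence from the single initial condition $0$ suffices---no rate, and no uniformity over initial conditions, is needed. Incidentally, even the uniform convergence you identify as the ``genuine technical obstacle'' is a soft consequence of the same monotonicity: for $0\le X\le M$ one has $g^{l}(0)\le g^{l}(X)\le g^{l}(M)$ with both endpoint sequences converging to $\overline{X}$, so no geometric-rate argument is ever required. Replacing your uniform-convergence step with the monotone comparison against the iteration started at $0$ closes the gap and makes your proof coincide with the paper's.
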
}
Thus, it is a worthy goal to design $Y^{(i)}$ to limit $\overline{P}$.  We address the estimation and communication tradeoff by minimizing the system communication rate subject to this bound.
\begin{align}
\mbox{\textbf{Problem 1:}}~~~ Y_{*}^{(i)} = \underset{Y^{(i)} \ge 0,~i= 1,\cdots, m~~}  \argmin{\sum_{i=1}^m \lambda^{(i)} }, \nonumber \\ \mbox{     subject to } \overline{P} \le \Delta. 
\label{eq:original opt} 
\end{align}
 \footnote{$Y^{(i)} \ge 0$ is chosen to ensure the problem is feasible for solvers. To ensure $Y \in \mathbb{S}_{++}^m$, consider $Y^{(i)} \ge \epsilon I$ where $\epsilon > 0$.} 
Here, the matrix $\Delta$ serves as an upper bound on our worse case error covariance, thus providing a robust bound on our estimation quality. Unfortunately, this formulation deals with a nonconvex minimization problem which cannot easily be solved. However, we observe the following result.
\begin{lem}
\alert{Define $f(x) \triangleq 1-(1+x)^{-\frac{1}{2}}$ and $g(x) = 1 - \exp(x)^{-\frac{1}{2}}$.
Given $\lambda^{(i)}$ from \eqref{eq:comm rate}, $\Pi^{(i)} > 0$ and $Y^{(i)} > 0$, the following inequality holds
\begin{align}
f\left(\sum_{i=1}^m \tr 
\left( \Pi^{(i)}Y_*^{(i)} \right)\right) \le \lambda^{opt} \le m g\left(\frac{1}{m}\sum_{i=1}^m \tr 
\left( \Pi^{(i)}Y_*^{(i)} \right)\right)
\end{align}
 where $\lambda^{opt}$ is the global minimum of Problem 1. }
\end{lem}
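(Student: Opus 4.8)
The plan is to reduce the matrix statement to a scalar one via the eigenvalues of $\Pi^{(i)}Y_*^{(i)}$, then apply elementary convexity arguments separately for the two bounds. Writing $M_i \triangleq \Pi^{(i)}Y_*^{(i)}$, I would first note that although $M_i$ is not symmetric, it is similar to $(\Pi^{(i)})^{1/2}Y_*^{(i)}(\Pi^{(i)})^{1/2} \in \mathbb{S}_{++}^{s_i}$, so its eigenvalues $\mu_1^{(i)},\dots,\mu_{s_i}^{(i)}$ are strictly positive, with $\tr(M_i) = \sum_j \mu_j^{(i)} \triangleq t_i$ and $\det(I+M_i) = \prod_j (1+\mu_j^{(i)})$. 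Substituting Theorem \ref{theorem:openrate} then gives $\lambda^{opt} = \sum_{i=1}^m \bigl(1 - \det(I+M_i)^{-1/2}\bigr)$, so everything hinges on bounding each determinant above and below by a function of $t_i$ alone.

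For the per-sensor determinant bounds I would expand the product $\prod_j(1+\mu_j^{(i)})$ and use positivity of the $\mu_j^{(i)}$ twice. Dropping all cross terms gives $\det(I+M_i) \ge 1 + \sum_j \mu_j^{(i)} = 1 + t_i$, while the inequality $1+x \le e^x$ gives $\det(I+M_i) \le \prod_j e^{\mu_j^{(i)}} = e^{t_i}$. Translating through $\lambda^{(i)} = 1 - \det(I+M_i)^{-1/2}$, which is monotone increasing in the determinant, yields the two scalar per-sensor estimates $f(t_i) \le \lambda^{(i)} \le g(t_i)$, with $f$ and $g$ exactly the functions in the statement.

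It remains to aggregate the $m$ scalar bounds. For the upper bound I would use that $g(x) = 1 - e^{-x/2}$ is concave, so Jensen's inequality gives $\sum_i g(t_i) \le m\,g\bigl(\tfrac{1}{m}\sum_i t_i\bigr)$, matching the right-hand side. For the lower bound I would use that $f(x) = 1-(1+x)^{-1/2}$ is concave with $f(0)=0$; any such function is subadditive (from $f(a) \ge \tfrac{a}{a+b}f(a+b)$ and the symmetric inequality in $b$, summed), so by induction $f\bigl(\sum_i t_i\bigr) \le \sum_i f(t_i)$, matching the left-hand side. Chaining these with the per-sensor estimates closes both directions; note that optimality of $Y_*^{(i)}$ is never actually used, since the per-sensor bounds hold for any $Y^{(i)}>0$ and we merely evaluate them at the optimizer.

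The main obstacle I anticipate is keeping the inequality directions consistent on the lower side: one must combine the determinant lower bound $\det(I+M_i)\ge 1+t_i$ with the fact that a concave function vanishing at the origin is sub- rather than super-additive, and a reversed sign in either step breaks the chain. The determinant estimates and the Jensen step are routine; the real conceptual care is in recognizing that concavity-through-the-origin is precisely what delivers the needed inequality $f(\sum_i t_i)\le\sum_i f(t_i)$.
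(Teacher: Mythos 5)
Your proof is correct, and it follows the same four-link chain as the paper's proof, namely $f\bigl(\sum_i t_i\bigr) \le \sum_i f(t_i) \le \lambda^{opt} \le \sum_i g(t_i) \le m\,g\bigl(\tfrac{1}{m}\sum_i t_i\bigr)$, but you fill in the two middle links that the paper does not prove at all: the per-sensor estimates $f(t_i) \le \lambda^{(i)} \le g(t_i)$ are simply cited from \cite{2013CDC}, whereas you derive them from scratch via the similarity of $\Pi^{(i)}Y_*^{(i)}$ to $(\Pi^{(i)})^{1/2}Y_*^{(i)}(\Pi^{(i)})^{1/2}$, the expansion $\prod_j(1+\mu_j^{(i)}) \ge 1+\sum_j \mu_j^{(i)}$ for nonnegative eigenvalues, the bound $1+x \le e^x$, and monotonicity of $1-d^{-1/2}$ in $d$, applied to the rate formula of Theorem 2. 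For the outer links, your upper-bound step (Jensen on the concave $g$) is identical to the paper's; for the lower bound the paper argues subadditivity of $f$ by a derivative comparison (since $f'$ is decreasing, $\partial_{u_i}\sum_j f(u_j) \ge \partial_{u_i} f(u)$, with equality of both sides at $u=0$), while you use the equivalent chord argument for a concave function vanishing at the origin; both are valid. The trade-off is that the paper's proof is shorter but leans on an external reference and a rather terse derivative argument, whereas yours is self-contained and elementary, and it also makes explicit a point the paper leaves implicit: optimality of $Y_*^{(i)}$ is never used, since the bounds hold pointwise for any feasible $Y^{(i)} > 0$, which is precisely what justifies replacing Problem 1 by the surrogate objective of Problem 2.
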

\begin{proof}
\alert{Let $u = \sum_{i=1}^m u_i$ and $u_i = \tr\left(\Pi^{(i)}Y_*^{(i)}\right)$. We observe that}
\begin{equation}
\alert{ f(u) \le \sum_{i=1}^m f(u_i) \le \lambda^{opt}  \le \sum_{i=1}^m g(u_i) \le mg\left(\frac{u}{m}\right)}
\end{equation}
\alert{The first equality holds for $u=0$. The inequality holds since partial derivatives of $\sum_{i=1}^m f(u_i)$  with respect to $u_i$ are greater than or equal to those of $f(u)$. The second and third inequalities are proved in \cite{2013CDC}. Applying Jensen's inequality to $g$ which is concave, we get the last inequality.}
\end{proof}
\alert{Since the optimum value of our objective function can be bounded by two increasing functions of $\sum_{i=1}^m \tr\left(\Pi^{(i)}Y^{(i)}\right)$, we propose the following convex relaxation to Problem 1.} 
\alert{\begin{align}
\mbox{\textbf{Problem 2:}} ~~~~Y_*^{(i)} &= \alert{\underset{Y^{(i)} \ge 0,~i= 1,\cdots, m~~}\argmin{\sum_{i=1}^m \tr\left( \Pi^{(i)}Y^{(i)} \right) }} \nonumber , 
\\  &\mbox{     ~~~~~~~~~~~subject to } \overline{P} \le \Delta.
\label{eq:optimization new obj}
\end{align}}
There exist challenges with the constraint since $\overline{P}$ is only defined through $\overline{X}$ which itself is defined through an implicit function $g_w$ . The following theorem allows us to obtain an equivalent set of constraints and thus formulate the problem as a semi-definite program.

\begin{theorem}
The optimal $Y^{(i)}$ satisfying Problem 2 can be found by solving the following problem.
\begin{equation}
\alert{\mbox{\textbf{Solve:}} ~~~Y_*^{(i)} = \underset{Y^{(i)} \ge 0,~i= 1,\cdots, m~~} \argmin{\sum_{i=1}^m \tr\left( \Pi^{(i)}Y^{(i)} \right) },\nonumber}
\end{equation}
\begin{align}
\left[ {\begin{array}{*{5}c}
                 Q^{-1}-S+C^{\prime}R^{-1}C &  Q^{-1}A   & C^{\prime}R^{-1} \\
                 A^{\prime}Q^{-1} & A^{\prime}Q^{-1}A + S & 0 \\
                 R^{-1}C & 0 & Y+R^{-1}
                 \end{array}} \right] \ge 0 \nonumber ,\\ Y^{(i)} \ge 0, ~~~ S \ge \Delta^{-1}. \nonumber
\label{eq:optimization problem}
\end{align}
\end{theorem}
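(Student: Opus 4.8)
The plan is to use that Problem 2 and the proposed semidefinite program carry the \emph{identical} objective $\sum_{i=1}^m \tr(\Pi^{(i)}Y^{(i)})$, so it suffices to prove that their feasible sets, once the auxiliary variable $S$ is projected out, coincide in the variable $Y$. Concretely, for fixed $Y>0$ I would show that there exists $S\ge\Delta^{-1}$ satisfying the displayed block LMI if and only if $\overline P\le\Delta$, where $\overline P$ is the posterior covariance of the Corollary built from the fixed point $\overline X=g_{R+Y^{-1}}(\overline X)$.

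First I would collapse the $3\times 3$ block inequality to a single inequality in $S$ by two Schur complements. Eliminating the block $Y+R^{-1}>0$ and using $R^{-1}-R^{-1}(Y+R^{-1})^{-1}R^{-1}=(R+Y^{-1})^{-1}$ reduces the $(1,1)$ entry to $Q^{-1}-S+C'(R+Y^{-1})^{-1}C$; eliminating the block $A'Q^{-1}A+S>0$ and using the Woodbury identity $(Q+AS^{-1}A')^{-1}=Q^{-1}-Q^{-1}A(S+A'Q^{-1}A)^{-1}A'Q^{-1}$ then shows the LMI is equivalent to
\[
S\ \le\ C'(R+Y^{-1})^{-1}C+(AS^{-1}A'+Q)^{-1}.
\]
Denoting the right-hand side by $\phi(S)$, this is exactly the statement that $S$ is a \emph{subsolution} of the information-form Riccati recursion whose unique fixed point is $\overline S\triangleq\overline P^{-1}$; that the fixed point of $\phi$ is $\overline P^{-1}$ follows from $\overline P^{-1}=\overline X^{-1}+C'(R+Y^{-1})^{-1}C$ together with $\overline X=A\overline P A'+Q$, both immediate from the definitions.

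Next I would invoke monotonicity. The map $\phi$ is order preserving, and under the detectability of $(A,C)$ and $Q>0$ the Riccati iterates converge to the unique steady state established in the earlier theorem on $P_k^-$. Standard monotone-operator reasoning then gives $S\le\phi(S)\Rightarrow S\le\overline S=\overline P^{-1}$. Combined with $S\ge\Delta^{-1}$ this yields $\overline P^{-1}\ge S\ge\Delta^{-1}$, i.e. $\overline P\le\Delta$, the constraint of Problem 2. For the reverse inclusion I would exhibit the witness $S=\overline P^{-1}$: it satisfies the LMI with equality because $\overline S=\phi(\overline S)$, and it meets $S\ge\Delta^{-1}$ precisely when $\overline P\le\Delta$. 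Hence the two feasible sets agree and the two programs share the minimizer $Y_*^{(i)}$.

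The main obstacle is the subsolution-to-fixed-point implication: I must argue rigorously that a matrix subsolution of $\phi$ lies below the steady-state information matrix. This rests on the monotone convergence of the DARE iteration to the \emph{unique} $\overline X$ — where I would lean on detectability, $Q>0$, and the uniqueness proved earlier — and on checking that every Schur-complement and Woodbury step is admissible, i.e. that $Y+R^{-1}$, $A'Q^{-1}A+S$, and $AS^{-1}A'+Q$ are invertible throughout the feasible region (guaranteed by $Q,R,Y,S>0$, the positivity of $S$ coming from $S\ge\Delta^{-1}$ and of $Y$ from the footnote's $Y^{(i)}\ge\epsilon I$).
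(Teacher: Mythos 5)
Your proposal is correct and takes essentially the same route as the paper's proof: the same two Schur-complement reductions and Woodbury identities collapse the $3\times 3$ LMI to $S \le (AS^{-1}A'+Q)^{-1} + C'(R+Y^{-1})^{-1}C$, and the same monotone fixed-point argument (with witness $S=\overline{P}^{-1}$ for the converse) identifies feasibility of this condition with $\overline{P}\le\Delta$; your information-form subsolution $S\le\phi(S)$ is exactly the paper's covariance-form supersolution $\bar{h}(U)\le U$ under $U=S^{-1}$. Incidentally, your inequality direction is the internally consistent one: the paper's intermediate claim that there exists $0<U\le\Delta$ with $U\le\bar{h}(U)$ is a sign slip, since only $\bar{h}(U)\le U$ both matches the paper's own $S$-form condition and makes the iteration argument deliver $\overline{P}\le U\le\Delta$.
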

\alert{The proof is found in the appendix.}
%A similar result and proof is obtained in \cite{2013CDC} and thus the proof is omitted.

\section{Numerical Analysis}
\label{sec:numanly}
To assess performance, we consider a thermal model for data centers, introduced in \cite{Parolini2012}. The size of data centers has been growing both in number and capacity, resulting in rising energy costs. To conserve energy, \cite{Parolini2012} considers the following thermal model for energy control.
\begin{align}
 \begin{bmatrix} \dot{T}_s^{out} \\ \dot{T}_c^{out} \\ \dot{T}_o^{out} \end{bmatrix} \nonumber   = 
 \begin{bmatrix}  k_s(\Psi_{ss}-1)  & k_s\Psi_{sc}  & k_s\Psi_{so} \\
                           k_c\Psi_{cs} & k_c(\Psi_{cc}-1) & k_c\Psi_{co} \\
                           k_{o}\Psi_{os} &k_{o}\Psi_{oc} & k_o(\Psi_{oo}-1) 
 \end{bmatrix}
& \begin{bmatrix} {T}_s^{out} \\ {T}_c^{out} \\ {T}_o^{out} \end{bmatrix} \\
& + Bu,
 \end{align}
 \begin{equation}
 \begin{bmatrix} T_s^{in} \\ T_c^{in} \\ T_o^{in} \end{bmatrix} =  \begin{bmatrix} \Psi_{ss}  & \Psi_{sc}  & \Psi_{so} \\
                           \Psi_{cs} & \Psi_{cc} & \Psi_{co} \\
                           \Psi_{os} &\Psi_{oc} &  \Psi_{oo} 
 \end{bmatrix} \begin{bmatrix} {T}_s^{out} \\ {T}_c^{out} \\ {T}_o^{out} \end{bmatrix}
 + Du.
 \end{equation}
  Here the state $x$ is a collection of output temperatures of devices while the measured values $y$ are the input temperatures of devices \alert{which require multiple sensors}. The subscripts represent different nodes under consideration, where `s' corresponds to servers, `c' corresponds to air conditioners, and `o' corresponds to other devices. The inputs include a reference temperature for the air conditioners, power consumed,  and temperature of heat sources. $\Psi$ gives weight to how the temperature output of each node affects the temperature into each node and $k$ is a set of thermal constants. Addressing the trade-off between estimation and communication in this example will  reduce energy expenditures and data storage necessary for thermal control.

To obtain a model consistent with \eqref{sys:model}, we linearize the system around its stable equilibrium, and assume the inputs remain at or near their equilibrium values for all time, a valid assumption during the night or backup periods. Furthermore, we sample the system at a rate of $\frac{1}{150} Hz$. We consider a system with 16 servers, 3 air conditioners, and 1 other device. The matrices $Q$ and $R$ are generated as a product of a random matrix with entries uniform from 0 to 1 multiplied by its transpose. The matrices are scaled so that the average magnitude of error in $w_k$ is $0.1$ Kelvin and in $v_k$ is $0.5$ Kelvin. In Fig 1, we plot the mean squared error in the state estimate as a function of the average communication rate, where each data point is obtained over a run of 10,000 trials. We consider 3 main designs. We first consider a random design where for each sensor at each time step, the probability of transmission is $\lambda_{avg}$. We also consider a stochastic design where each sensor communicates at the same rate, and an optimized design from Problem 2. Also shown are upper and lower bounds for the un-optimized approach. In Fig 2, we plot the percent improvement of the stochastic designs relative to the random design in terms of the mean squared error plotted in Fig 1. An un-optimized design provides as much as 15\% improvement, while the optimized design offers as much as 30\% improvement.

\begin{figure}
\begin{center}
\includegraphics[width=8cm]{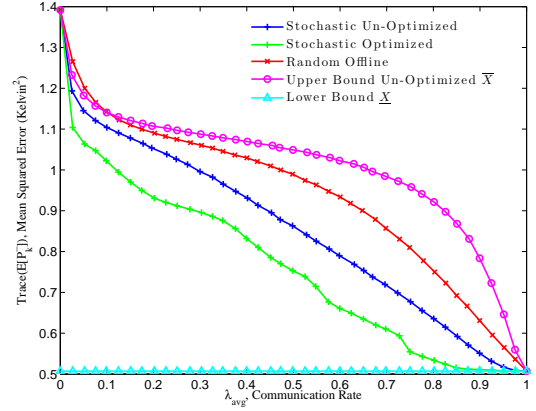}   % The printed column width is 8.4 cm.
\caption{$\trace(\mathbb{E}[P_k^-])$ under three scheduling strategies: random, stochastic, and stochastic optimized vs $\lambda_{avg}$, the communication rate} 
\label{fig:bifurcation}
\end{center}
\end{figure}

\begin{figure}
\begin{center}
\includegraphics[width=8cm]{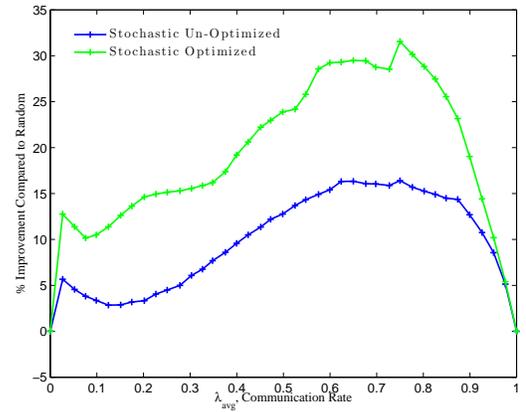}   % The printed column width is 8.4 cm.
\caption{Percent improvement of designed stochastic triggers compared to random triggers} 
\end{center}
\end{figure}

%% There are a number of predefined theorem-like environments in
%% ifacconf.cls:
%%
%% \begin{thm} ... \end{theorem}            % Theorem
%% \begin{lem} ... \end{lem}            % Lemma
%% \begin{claim} ... \end{claim}        % Claim
%% \begin{conj} ... \end{conj}          % Conjecture
%% \begin{cor} ... \end{cor}            % Corollary
%% \begin{fact} ... \end{fact}          % Fact
%% \begin{hypo} ... \end{hypo}          % Hypothesis
%% \begin{prop} ... \end{prop}          % Proposition
%% \begin{crit} ... \end{crit}          % Criterion

\section{Conclusion}
In this paper we considered a stochastic event trigger for the sensor scheduling problem in multi-sensor networked systems. The stochastic trigger has inherent advantages over offline triggers which can not improve estimates using information contained by the absence of a measurement. Moreover, it maintains the Gaussian properties of the state, an advantage over previous event triggered approaches. We thus could derive a recursive filter to obtain the MMSE estimator and error covariance. Additionally, we obtained an expression for sensor communication rate as well as asymptotic bounds for our error covariance. Finally, we introduced an optimization problem that will allow designers to reduce the overall communication rate in the system subject to some upper bound on the worst case error covariance. Future work consists of considering the stochastic trigger in a system with control inputs and incorporating inter-sensor cooperation.

\bibliographystyle{IEEEtran}
\bibliography{Sean_Journal_Stochastic_Trigger72} % this line specifies bibdemo.bib as database

% Generated by IEEEtran.bst, version: 1.13 (2008/09/30)
\begin{thebibliography}{10}
\providecommand{\url}[1]{#1}
\csname url@samestyle\endcsname
\providecommand{\newblock}{\relax}
\providecommand{\bibinfo}[2]{#2}
\providecommand{\BIBentrySTDinterwordspacing}{\spaceskip=0pt\relax}
\providecommand{\BIBentryALTinterwordstretchfactor}{4}
\providecommand{\BIBentryALTinterwordspacing}{\spaceskip=\fontdimen2\font plus
\BIBentryALTinterwordstretchfactor\fontdimen3\font minus
  \fontdimen4\font\relax}
\providecommand{\BIBforeignlanguage}[2]{{%
\expandafter\ifx\csname l@#1\endcsname\relax
\typeout{** WARNING: IEEEtran.bst: No hyphenation pattern has been}%
\typeout{** loaded for the language `#1'. Using the pattern for}%
\typeout{** the default language instead.}%
\else
\language=\csname l@#1\endcsname
\fi
#2}}
\providecommand{\BIBdecl}{\relax}
\BIBdecl

\bibitem{joao07}
J.~Hespanha, P.~Naghshtabrizi, and Y.~Xu, ``A survey of recent results in
  networked control systems,'' \emph{Proceedings of the IEEE}, vol.~95, no.~1,
  pp. 138--162, 2007.

\bibitem{mahalik2007sensor}
N.~Mahalik, ``Sensor networks and configuration: fundamentals, standards,
  platforms, and applications,'' \emph{Recherche}, vol.~67, p.~02, 2007.

\bibitem{ribeiro2006bandwidth}
A.~Ribeiro and G.~B. Giannakis, ``Bandwidth-constrained distributed estimation
  for wireless sensor networks-part i: Gaussian case,'' \emph{IEEE Transactions
  on Signal Processing}, vol.~54, no.~3, pp. 1131--1143, 2006.

\bibitem{luo2005isotropic}
Z.-Q. Luo, ``An isotropic universal decentralized estimation scheme for a
  bandwidth constrained ad hoc sensor network,'' \emph{IEEE Journal on Selected
  Areas in Communications}, vol.~23, no.~4, pp. 735--744, 2005.

\bibitem{mosensor}
Y.~Mo, R.~Ambrosino, and B.~Sinopoli, ``Sensor selection strategies for state
  estimation in energy constrained wireless sensor networks,''
  \emph{Automatica}, vol.~47, no.~7, pp. 1330--1338, 2011.

\bibitem{2013CDC}
\BIBentryALTinterwordspacing
D.~Han, Y.~Mo, J.~Wu, S.~Weerakkody, B.~Sinopoli, and L.~Shi, ``Stochastic
  event-triggered sensor schedule for remote state estimation,'' \emph{IEEE
  Transactions on Automatic Control, Accepted}, 2014. [Online]. Available:
  \url{http://arxiv.org/pdf/1402.0599.pdf}
\BIBentrySTDinterwordspacing

\bibitem{yang2011deterministic}
C.~Yang and L.~Shi, ``Deterministic sensor data scheduling under limited
  communication resource,'' \emph{IEEE Transactions on Signal Processing},
  vol.~59, no.~10, pp. 5050--5056, 2011.

\bibitem{shi2012scheduling}
L.~Shi and H.~Zhang, ``Scheduling two gauss--markov systems: An optimal
  solution for remote state estimation under bandwidth constraint,'' \emph{IEEE
  Transactions on Signal Processing}, vol.~60, no.~4, pp. 2038--2042, 2012.

\bibitem{astrom2002}
K.~J. Astrom and B.~M. Bernhardsson, ``Comparison of {R}iemann and {L}ebesgue
  sampling for first order stochastic systems,'' in \emph{Proceedings of IEEE
  Conference on Decision and Conference}, no.~2, 2002, pp. 2011--2016.

\bibitem{imer2005optimal}
O.~C. Imer and T.~Basar, ``Optimal estimation with limited measurements,'' in
  \emph{Decision and Control, 2005 and 2005 European Control Conference.
  CDC-ECC'05. 44th IEEE Conference on}, 2005, pp. 1029--1034.

\bibitem{Xu2005}
Y.~Xu and J.~Hespanha, ``Estimation under uncontrolled and controlled
  communication in networked control systems,'' in \emph{Proceedings of the
  44th IEEE Conference on Decision and Control and the European Control
  Conference}, 2005, pp. 842--847.

\bibitem{Ribeiro2006}
A.~Ribeiro, G.~B. Giannakis, and S.~I. Roumeliotis, ``Soi-kf: Distributed
  kalman filtering with low-cost communications using the sign of
  innovations,'' \emph{IEEE Transactions on Signal Processing}, vol.~54,
  no.~12, pp. 4782 -- 4795, 2006.

\bibitem{Wutobepublished}
J.~Wu, Q.~Jia, K.~Johansson, and L.~Shi, ``Event-based sensor data scheduling:
  Trade-off between communication rate and estimation quality,'' \emph{IEEE
  Transactions on Automatic Control}, vol.~58, no.~4, pp. 1041--1046, 2013.

\bibitem{ramesh2013}
K.~H.~J. C.~Ramesh, H.~Sandberg, ``Design of state-based schedulers for a
  network of control loops,'' \emph{IEEE Transactions on Automatic Control},
  vol.~58, no.~8, pp. 1962--1975, 2013.

\bibitem{Weerakkody2013}
S.~Weerakkody, Y.~Mo, B.~Sinopoli, D.~Han, and L.~Shi, ``Multi-sensor
  scheduling for state estimation with event-based stochastic triggers,'' in
  \emph{4th IFAC Workshop on Distributed Estimation and Control in Networked
  Systems}, 2013, pp. 15--22.

\bibitem{Scharf91}
L.~Scharf, \emph{Statistical Signal Processing}.\hskip 1em plus 0.5em minus
  0.4em\relax Addison Wesley, 1991.

\bibitem{Parolini2012}
L.~Parolini, ``Models and control strategies for data center energy
  efficiency,'' Ph.D. dissertation, Carnegie Mellon University.

\end{thebibliography}
\alert{
\section{Appendix}
\begin{proof} \textit(Lemma 1) \\
We begin by observing the following identities associated with $\Lambda_k$ and $\Gamma_k$.
\begin{enumerate}
\item $\Gamma_k\Gamma_k^{\prime} = I_{\sum_{i=1}^{m-l_k} s_{j_i}}$,
\item $\Lambda_k\Lambda_k^{\prime} = I_{\sum_{i=1}^{l_k} s_{p_i}}$,
\item $ \Gamma_k^{\prime}\Gamma_k = I_m - \Lambda_k^{\prime}\Lambda_k = \Psi_k$,
\item $ \Lambda_k\Gamma_k^{\prime} = 0$.
\end{enumerate}
We note that sum of following two quadratic forms can be expressed as
\begin{align}
&(x-\mu_1)^{\prime}\Sigma_1(x-\mu_1)  + x^{\prime}\Sigma_2x  \nonumber \\ &= x^{\prime}(\Sigma_1 + \Sigma_2)x - 2x^{\prime}(\Sigma_1\mu_1) + \mu_1^{\prime}\Sigma_1\mu_1 \nonumber, \\
& = \left(x - (\Sigma_1+\Sigma_2)^{-1}\Sigma_1\mu_1\right) ^{\prime}(\Sigma_1+\Sigma_2) \left(x - (\Sigma_1+\Sigma_2)^{-1}\Sigma_1\mu_1\right)
\nonumber \\ &+ \mu_1^{\prime}(\Sigma_1 - \Sigma_1(\Sigma_1+\Sigma_2)^{-1} \Sigma_1) \mu_1. \label{eq:quadform}
\end{align}
Setting 
 $\tilde Y \triangleq \left[ {\begin{array}{*{5}c}
 0 & 0\\
   0  & \Lambda_kY\Lambda_k^{\prime}
  \end{array}} \right]$,  
and noting that 
\begin{equation*}
\ y_k^\prime\Lambda_k^\prime \Lambda_kY\Lambda_k^{\prime}\Lambda_ky_k = 
\left[ {\begin{array}{*{20}c}
      x_k \\
      \Lambda_ky_k 
    \end{array}} \right]'   \left[ {\begin{array}{*{5}c}
 0 & 0\\
   0  & \Lambda_kY\Lambda_k^{\prime}
  \end{array}} \right] \left[ {\begin{array}{*{20}c}
      x_k \\
       \Lambda_ky_k 
    \end{array}} \right],
    \end{equation*}
we can directly apply \eqref{eq:quadform} to \eqref{eq:quadratic} to obtain 
     \begin{equation}
    \theta_k =  \left[ {\begin{array}{*{20}c}
      x_k - \bar x_k\\
      \Lambda_ky_k - \bar y_k
    \end{array}} \right]' \Theta_k^{-1} \left[ {\begin{array}{*{20}c}
      x_k - \bar x_k\\
      \Lambda_ky_k - \bar y_k
    \end{array}} \right] + c_k,
      \end{equation}
   where we have
  \begin{align}
  \Theta_k  &=\left(\Phi_k^{-1}  + \tilde Y \right) ^{-1} ,
  \left[ {\begin{array}{*{20}c}
     \bar x_k\\
      \bar y_k
    \end{array}} \right] =   \left(\Phi_k^{-1} +  \tilde Y \right) ^{-1} \Phi_k^{-1}  
  \left[ {\begin{array}{*{20}c}
     \mu_x\\
     \mu_y
    \end{array}} \right] ,  \label{eq:means}
    \\  c_k &=   \left[ {\begin{array}{*{20}c}
     \mu_x\\
     \mu_y
    \end{array}} \right] ^{\prime}\left (\Phi_k^{-1} - \Phi_k^{-1} \left(\Phi_k^{-1} + \tilde Y\right)^{-1}\Phi_k^{-1} \right) \left[ {\begin{array}{*{20}c}
     \mu_x\\
     \mu_y
    \end{array}} \right]. \label{eq:ck}
   \end{align}
   We now attempt to verify \eqref{eq:THETAK}. From Lemma 1 of \cite{2013CDC}, we can directly obtain 
   \begin{align}
       &\Theta_k  =  \begin{bmatrix} \Sigma_{xx} & 0  \\ 0 & 0 \end{bmatrix} + \nonumber \\  &\begin{bmatrix}  - \Sigma_{xy} ( \Sigma_{yy} + \Lambda_k Y^{-1} \Lambda_k^{\prime})^{-1}\Sigma_{xy}^{\prime} &
      \Sigma_{xy}(I + \Lambda_kY\Lambda_k^\prime \Sigma_{yy})^{-1} \\
      [\Sigma_{xy}(I + \Lambda_kY\Lambda_k^\prime \Sigma_{yy})^{-1}]'  & \left[\Sigma_{yy}^{-1}+\Lambda_kY\Lambda_k^{\prime}\right]^{-1}\\
    \end{bmatrix}.
   \end{align}
   Defining $Z_k \triangleq CP_k^-C^{\prime}+R$ and  $W_k \triangleq\Gamma_k^{\prime}(\Gamma_kZ_k\Gamma_k^{\prime})^{-1}\Gamma_k$ and substituting equations \eqref{eq: Pkplus equation},\eqref{eq:sigmaxy},\eqref{eq:sigmayy}, we obtain
   \begin{align}
   \Sigma_{xx} - \Sigma_{xy} &\left( \Sigma_{yy} + (\Lambda_k Y \Lambda_k^{\prime})^{-1} \right)^{-1}\Sigma_{xy}^{\prime}\nonumber \\ &= P_k^- - P_k^-C^{\prime}(V_k + U_k)C P_k^-,
   \end{align}
   where
   \begin{align}
   V_k & \triangleq W_k + (\Lambda_k^{\prime} - W_kZ_k\Lambda_k^{\prime}) \nonumber \\ &\times \left(\Lambda_k(Z_k - Z_kW_kZ_k)\Lambda_k^{\prime} \right)^{-1}(\Lambda_k -\Lambda_kZ_kW_k), \\
   U_k & \triangleq  -(\Lambda_k^{\prime} - W_kZ_k\Lambda_k^{\prime})  \left(\Lambda_k(Z_k - Z_kW_kZ_k)\Lambda_k^{\prime} \right)^{-1} \nonumber \\  &\times \left( \Lambda_kY\Lambda_k^{\prime} + \left(\Lambda_k(Z_k - Z_kW_kZ_k)\Lambda_k^{\prime} \right)^{-1}\right)^{-1}  \nonumber, \\ &\times \left(\Lambda_k(Z_k - Z_kW_kZ_k)\Lambda_k^{\prime} \right)^{-1} (\Lambda_k -\Lambda_kZ_kW_k).
   \end{align}
  Since $ \Lambda_kZ_k (\Lambda_k^{\prime} - W_kZ_k\Lambda_k^{\prime})\left(\Lambda_k(Z_k - Z_kW_kZ_k)\Lambda_k^{\prime} \right)^{-1} = I$ and $\Lambda_k$ has a unique right inverse equal to $\Lambda_k^{\prime}$ we have,
  \begin{equation}
    (\Lambda_k^{\prime} - W_kZ_k\Lambda_k^{\prime})\left(\Lambda_k(Z_k - Z_kW_kZ_k)\Lambda_k^{\prime} \right)^{-1} = Z_k^{-1}\Lambda_k^{\prime}. \label{proof equation1}
   \end{equation}
   Furthermore, we observe that 
   \begin{align}
   &(\Lambda_kZ_k^{-1}\Lambda_k^{\prime})(\Lambda_k(Z_k - Z_kW_kZ_k)\Lambda_k^{\prime}) \nonumber \\ &=
   \Lambda_kZ_k^{-1}\Lambda_k^{\prime}\Lambda_kZ_k\Lambda_k^{\prime} -     \Lambda_kZ_k^{-1}\Lambda_k^{\prime}\Lambda_kZ_kW_kZ_k\Lambda_k^{\prime}  \nonumber, \\
   &= \Lambda_kZ_k^{-1}(I-\Gamma_k^{\prime}\Gamma_k)Z_k)\Lambda_k^{\prime} \nonumber \\ &~-\Lambda_kZ_k^{-1}(I-\Gamma_k^{\prime}\Gamma_k)Z_k\Gamma_k(\Gamma_kZ_k\Gamma_k^{\prime})^{-1}\Gamma_kZ_k\Lambda_k \nonumber, \\
  &=  \Lambda_k\Lambda_k^{\prime} - \Lambda_kZ_k^{-1}\Gamma_k^{\prime}\Gamma_kZ_k\Lambda_k^{\prime} + \Lambda_kZ_k^{-1}\Gamma_k^{\prime}\Gamma_kZ_k\Lambda_k^{\prime}   \nonumber \\ &~~~-\Lambda_k\Gamma_k^{\prime}(\Gamma_kZ_k\Gamma_k^{\prime})^{-1}\Gamma_kZ_k\Lambda_k^{\prime}  \nonumber, \\
  &= I. \label{proof equation 2}
   \end{align}
   Thus, from \eqref{proof equation1} , we obtain
   \begin{align}
   V_k &= W_k + Z_k^{-1}\Lambda_k^{\prime}(\Lambda_k -\Lambda_kZ_kW_k), \nonumber \\
&=  Z_k^{-1}\Lambda_k^{\prime}\Lambda_k + (I  - Z_k^{-1} \Lambda_k^{\prime} \Lambda_k Z_k) W_k
\nonumber,\\ &= Z_k^{-1}\Lambda_k^{\prime}\Lambda_k + (I  - Z_k^{-1} (I - \Gamma_k^{\prime}\Gamma_k) Z_k) W_k
\nonumber,\\ &= Z_k^{-1}\Lambda_k^{\prime}\Lambda_k + Z_k^{-1} \Gamma_k^{\prime}\Gamma_k Z_k(\Gamma_k^{\prime}(\Gamma_kZ_k\Gamma_k^{\prime})^{-1}\Gamma_k)
\nonumber,\\ &=  Z_k^{-1}\Lambda_k^{\prime}\Lambda_k + Z_k^{-1} \Gamma_k^{\prime}\Gamma_k
\nonumber,\\ & = Z_k^{-1}.
   \end{align}
   Moreover, from \eqref{proof equation1} and \eqref{proof equation 2}, we have 
   \begin{align}
   U_k &= -Z_k^{-1}\Lambda_k^{\prime}(\Lambda_kY\Lambda_k^{\prime} + \Lambda_kZ_k^{-1}\Lambda_k^{\prime})^{-1}\Lambda_kZ_k^{-1}.
   \end{align}
  From the matrix inversion lemma
  \begin{equation}
  U_k+V_k = (CP_k^-C^{\prime}+R+(I-\Psi_k)Y)^{-1}.
  \end{equation}
  As such, \eqref{eq:THETAK} is verified. Next from \eqref{eq:means}
  \begin{align}
  \left[ {\begin{array}{*{20}c}
     \bar x_k\\
      \bar y_k
    \end{array}} \right] &=   \left(I +  \Phi_k\tilde Y \right) ^{-1}  
  \left[ {\begin{array}{*{20}c}
     \mu_x\\
     \mu_y
    \end{array}} \right],  \nonumber  \\
   &= \left(\begin{bmatrix} I & 0 \\ 0 & I \end{bmatrix} + \begin{bmatrix} \Sigma_{xx} & \Sigma_{xy} \\ \Sigma_{xy}^{\prime} & \Sigma_{yy} \end{bmatrix} \begin{bmatrix} 0 & 0 \\ 0 & \Lambda_k Y \Lambda_k^{\prime} \end{bmatrix} \right)^{-1}   \left[ {\begin{array}{*{20}c}
     \mu_x\\
     \mu_y
    \end{array}} \right],  \nonumber  \\ &= \begin{bmatrix} I & \Sigma_{xy}(\Lambda_kY^{-1}\Lambda_k^{\prime} + \Sigma_{yy})^{-1}\\ 0 & (I + \Sigma_{yy}\Lambda_kY\Lambda_k^{\prime})^{-1} \end{bmatrix}  \left[ {\begin{array}{*{20}c}
     \mu_x\\
     \mu_y
    \end{array}} \right].
  \end{align}
  Therefore, it can be seen that
  \begin{equation}
  \bar x_k = \hat{x}_k^-  + P_k^-C^{\prime}S_k,
  \end{equation}
  where 
  \begin{align*}
  &S_k = W_k(y_k-C\hat{x}_k^-) \\ &- (\Lambda_k^{\prime}-W_kZ_k\Lambda_k^{\prime})\left(\Lambda_k(Z_k - Z_kW_kZ_k)\Lambda_k^{\prime}+\Lambda_kY^{-1}\Lambda_k^{\prime}\right)^{-1} \\
  & \times ((\Lambda_k - \Lambda_kZ_kW_k)C\hat{x}_k^- + \Lambda_kZ_kW_ky_k).
  \end{align*}
  Noting that $W_k = W_k\Gamma_k\Gamma_k^{\prime}$ and $\Lambda_k\Gamma_k^{\prime} = 0$, we have
  \begin{align*}
  S_k &=  (W_k - (\Lambda_k^{\prime}-W_kZ_k\Lambda_k^{\prime})(\Lambda_k(Z_k - Z_kW_kZ_k)\Lambda_k^{\prime} \\ &+\Lambda_kY^{-1}\Lambda_k^{\prime})^{-1}(\Lambda_k - \Lambda_kZ_kW_k)) (\Gamma_k^{\prime}\Gamma_ky_k - C\hat{x}_k^-).
  \end{align*}
  Utilizing the matrix inversion lemma
  \begin{equation*}
  S_k  = (U_k+V_k)(\Psi_ky_k -  C\hat{x}_k^-).
  \end{equation*}
  Thus, \eqref{eq:xbar} and \eqref{eq:ybar} associated with $\bar{x}_k$ and $\bar{y}_k$ immediately follow. It now remains to verify the expression for $c_k$ in \eqref{eq:ybar}.
  We first observe by the matrix inversion lemma that
  \begin{align*}
  (\Phi_k^{-1} &+ \tilde Y)^{-1} = \left( \Phi_k^{-1} + \begin{bmatrix} 0 \\ I \end{bmatrix}  \Lambda_kY\Lambda_k^{\prime}\begin{bmatrix} 0 & I \end{bmatrix} \right)^{-1}, \\
  & = \Phi_k - \Phi_k\begin{bmatrix} 0 \\ I \end{bmatrix} \left( \begin{bmatrix} 0 & I \end{bmatrix} \Phi_k \begin{bmatrix} 0 \\ I \end{bmatrix} + \Lambda_kY^{-1}\Lambda_k^{\prime} \right) \begin{bmatrix} 0 & I \end{bmatrix} \Phi_k .
  \end{align*}
  Applying \eqref{eq:ck},
  \begin{align*}
   c_k &=   \left[ {\begin{array}{*{20}c}
     \mu_x\\
     \mu_y
    \end{array}} \right] ^{\prime} \begin{bmatrix} 0 & 0 \\ 0 & \left(\Sigma_{yy}+\Lambda_kY^{-1}\Lambda_k^{\prime}\right)^{-1} \end{bmatrix} \left[ {\begin{array}{*{20}c}
     \mu_x\\
     \mu_y
    \end{array}} \right] , \\
    &= \mu_y^{\prime}\left(\Sigma_{yy}+\Lambda_kY^{-1}\Lambda_k^{\prime}\right)^{-1} \mu_y.
  \end{align*}
\end{proof}
}
\begin{proof} \textit{ (Corollary 1)}
\alert{  To begin we define function $h(X,\Psi): S_{++}^{n} \times \{0,1\}^{s} \rightarrow S_{++}^n$ as 
  \begin{equation}
  h(X,\Psi) \triangleq X - XC^{\prime}\left(CXC^{\prime} + R + (I - \diag(\Psi))Y^{-1}\right)^{-1}CX.
  \end{equation}
  Using the matrix inversion lemma
  \begin{equation*}
  h(X,\Psi) = \left( X^{-1} + C^{\prime}\left(R+  (I - \diag(\Psi))Y^{-1}\right)^{-1}C \right)^{-1}.
  \end{equation*}
  This implies $h$ is monotonically increasing in $X$, and maximized for $\Psi = \mathbf{0}_s$.
  From Theorem 1, we observe that 
  \begin{equation}
  P_k = h(P_k^-, [\gamma_k^{(1)}\mathbf{1}_{s_1}^{\prime} \cdots  \gamma_k^{(m)}\mathbf{1}_{s_m}^{\prime} ]^{\prime}).
  \end{equation}
  By Theorem 4.2, we have that $P_k^- \le \overline{X} + \tilde \epsilon I$ for $k \ge \bar{N}(\tilde \epsilon)$. By the monotonicity of $h$, we obtain
  \begin{align*}
  P_k &\le h(\overline{X}+\tilde \epsilon I, \mathbf{0}) \\ &= (\overline{X}+\tilde \epsilon I) - \\ &(\overline{X}+\tilde \epsilon I)C^{\prime}(C (\overline{X}+\tilde \epsilon I)C^{\prime} + R + Y^{-1})^{-1}C (\overline{X}+\tilde \epsilon I).
  \end{align*}
  Moreover, by the continuity of $h$ in $X$, for any $\epsilon > 0$ there exists $\tilde \epsilon > 0$ such that 
  \begin{align}
    P_k~ &\le  ~ h(\overline{X}+\tilde \epsilon I, \mathbf{0}) ~ \nonumber \\ &\le ~ \overline{X} - \overline{X}C^{\prime}(C \overline{X}C^{\prime} + R + Y^{-1})^{-1}C \overline{X} + {\epsilon}I \nonumber \\ &=~ \overline{P} + \epsilon I,
  \end{align}
  for  $k \ge \bar N(\tilde \epsilon)  = N(\epsilon)$. We must now show that $P_k$ approaches this upper bound infinitely many times. To do this, define function $\bar{h}: S_{++}^{n} \rightarrow  S_{++}^{n}$ as
  \begin{align}
  &\bar{h}(X) \triangleq (AXA^{\prime}+Q) - \nonumber \\ &(AXA^{\prime}+Q)C^{\prime}(C (AXA^{\prime}+Q)C^{\prime} + R + Y^{-1})^{-1} C  (AXA^{\prime}+Q) \nonumber \\ &= h(AXA^{\prime}+Q, \mathbf{0}).
  \end{align}
  Note that $\bar h$ is monotonically increasing in $X$ since $h$ is monotonically increasing in its first argument and $AXA^{\prime}+Q$ is monotonically increasing in $X$. Utilizing Proposition 1 of \cite{2013CDC}, we know there exists an $l > 0$ such that 
  \begin{equation*}
   \bar{h}^{l}(0) \ge   \overline{X} - \overline{X}C^{\prime}(C \overline{X}C^{\prime} + R + Y^{-1})^{-1}C \overline{X} - {\epsilon}I = \bar{P} - \epsilon I.
  \end{equation*}
  If event $\bar{E}_{k,l}$ occurs, then we know that
  \begin{equation}
  P_{k+l}  = \bar{h}^l(P_k) \ge \bar{h}^{l}(0) \ge \bar{P} - \epsilon I.
  \end{equation}
  By Theorem 3, the event $\bar{E}_{k,l}$ almost surely occurs infinitely often and thus the result holds.
  } \end{proof} 
  \begin{proof}  \textit(Theorem 5) \
  \alert{We first assert that the following two statements are equivalent. 
  \begin{enumerate}
  \item $\bar{P} \le \Delta$,
  \item $\mbox{There exists }0 < U \le \Delta$ such that $U \le \bar{h}(U)$.
  \end{enumerate}
  The first statement implies the second by taking $U = \bar{P}$. Noting the monotonicity of $\bar{h}$ and the convergence of $\bar{h}^k$ to the fixed point $\bar{P}$, the second statement implies the first by repeatedly applying $\bar{h}$. Take $S$ = $U^{-1}$. Then by the matrix inversion lemma, the following statements are equivalent. 
  \begin{enumerate}
   \item $\bar{P} \le \Delta$,
  \item $\mbox{There exists } S \ge \Delta^{-1}$ such that $0 \le (AS^{-1}A^{\prime}+Q)^{-1} + C^\prime(R+Y^{-1})^{-1}C - S$.
  \end{enumerate}
    By the matrix inversion lemma, 
   \begin{equation*}
    (AS^{-1}A^{\prime}+Q)^{-1} = Q^{-1} - Q^{-1}A(S + A^{\prime}Q^{-1}A)^{-1}A^{\prime}Q^{-1}.
   \end{equation*}
   Thus, using Schur's condition for positive definiteness we have that the following statements are equivalent.
    \begin{enumerate}
   \item $\bar{P} \le \Delta$,
  \item $\mbox{There exists } S \ge \Delta^{-1}$ such that 
  \begin{equation} \begin{bmatrix}  Q^{-1} + C^\prime (R+Y^{-1})^{-1}C - S & Q^{-1}A \\ A^{\prime}Q^{-1}   & A^{\prime}Q^{-1}A + S \end{bmatrix} \ge 0,  \label{eq:2by2}
  \end{equation}
  \begin{equation*}
   A^{\prime}Q^{-1}A + S  > 0.
  \end{equation*}
   \end{enumerate}
  Since $S > 0$, the latter statement immediately holds. Now by the matrix inversion lemma, \eqref{eq:2by2} is equivalent to
  \begin{align*}
   \begin{bmatrix}  Q^{-1} + C^\prime R^{-1}C - S & Q^{-1}A \\ A^{\prime}Q^{-1}   & A^{\prime}Q^{-1}A + S \end{bmatrix} \\ - \begin{bmatrix} C^\prime R^{-1} \\ 0 \end{bmatrix} (R^{-1}+Y)^{-1} \begin{bmatrix} R^{-1}C & 0 \end{bmatrix} \ge 0.
  \end{align*}
  Thus, by Schur's condition for positive definiteness we have that the following statements are equivalent
    \begin{enumerate}
   \item $\bar{P} \le \Delta$,
  \item $\mbox{There exists } S \ge \Delta^{-1}$ such that 
  \begin{align*}
   \begin{bmatrix}  Q^{-1} + C^\prime R^{-1}C - S & Q^{-1}A  & C^\prime R^{-1} \\ A^{\prime}Q^{-1}   & A^{\prime}Q^{-1}A + S   & 0  \\ R^{-1}C & 0 & R^{-1}+Y \end{bmatrix}  \ge 0, \\ R^{-1} + Y > 0.
  \end{align*}
 \end{enumerate}
 Note that the latter statement is given for free since $R^{-1} > 0$. The theorem follows immediately.
  }\end{proof}
  
\end{document}